\newtheorem{theorem}{Theorem}[section]
\newtheorem{definition}{Definition}[section]
\newtheorem{example}{Example}[section]
\newtheorem{property}{Property}[section]
\newtheorem{lemma}{Lemma}[section]
\newtheorem{remark}{Remark}[section]
\newenvironment{proof}{{\textbf{\emph{Proof. }}}}{\hfill$\square\\$}
\newcommand{\com}[2]{}
\newcommand{\comm}[3]{}
\newcommand{\rexp}[3]{}
\newcommand{\hr}[1]{}
\newcommand{\hrr}[1]{}
\begin{document}
%
\title{Efficient Transition Adjacency Relation Computation for Process Model Similarity}
%
%
%
%

\author{Jisheng~Pei, Lijie~Wen, Xiaojun~Ye, and Akhil~Kumar
\IEEEcompsocitemizethanks{\IEEEcompsocthanksitem L. Wen is the corresponding author with School of Software, Tsinghua University, Beijing 100084, China.\protect\\
E-mail: wenlj@tsinghua.edu.cn
\IEEEcompsocthanksitem J. Pei, X. Ye are with Tsinghua University and A. Kumar is with Penn State University.}
}

%
%

\markboth{IEEE TRANSACTIONS ON SERVICES COMPUTING,~Vol.~x, No.~x, x~2018}%
{Pei \MakeLowercase{\textit{et al.}}: Efficient Transition Adjacency Relation Computation for Process Model Similarity}
%



\IEEEtitleabstractindextext{%
\begin{abstract}
Many activities in business process management, such as process retrieval, process mining and process integration, need to determine the similarity between business processes. Along with many other relational behavior semantics, Transition Adjacency Relation (abbr. TAR) has been proposed as a kind of behavioral gene of process models and a useful perspective for process similarity measurement. In this article we explain why it is still relevant and necessary to improve TAR or pTAR (i.e., projected TAR) computation efficiency and put forward a novel approach for TAR computation based on Petri net unfolding. This approach not only improves the efficiency of TAR computation, but also enables the long-expected combined usage of TAR and Behavior Profiles (abbr. BP) in process model similarity estimation.
\end{abstract}

\begin{IEEEkeywords}
Transition Adjacency Relation, Petri Net, Unfolding, Process Model Similarity.
\end{IEEEkeywords}}

\maketitle

\IEEEdisplaynontitleabstractindextext

%
\IEEEpeerreviewmaketitle

\section{Introduction}\label{sect:intro}

Since its first introduction \cite{TAR10}, Transition Adjacency Relation (TAR) has been applied in a wide range of business process model analysis scenarios such as similarity measurement~\cite{pTAR12,BP11,CBP10,IS11,CII12}, process model retrieval~\cite{SS09,ER12}, and process mining~\cite{DKE10,PM2nd}. Different from conformance checking between a process model and an event log~\cite{CC08}, a process similarity algorithm based on TAR (TAR similarity in short) takes two process models as the input and computes a similarity value ranging from 0.0 to 1.0 between them.  Furthermore, TAR similarity is also different from workflow inheritance techniques~\cite{IoW02}, which also takes two process models as the input but tries to determine whether one process model is inherited from another. TAR in a Petri net simply defines the set of ordered pairs $\langle t_1,t_2\rangle$ that can be executed one after another, but despite of its simplicity, the set of TARs reveal a useful perspective of the behaviors of the business process model. Later on, TAR was extended to projected Transition Adjacency Relation (pTAR) \cite{pTAR12}, which neglects the silent transitions so that users may selectively focus on a set of business-relevant transitions, and neglect the rest by marking them as silent transitions. However, both the original TAR computation algorithm \cite{TAR10} and pTAR computation algorithm \cite{pTAR12} rely on exploiting the state-space of the process model by constructing its reachability graph (RG), which leaves the state-explosion caused by concurrencies as an unsolved problem.

On the other hand, Behavior Profiles (BP) \cite{BP11} have been accepted in recent years as a most widely used relational semantics for business process analysis. It provides great descriptive power for process behaviors and can be efficiently computed with unfolding \cite{CBP10}). However, despite its various applications in scenarios such as consistency checking and similarity measurement, researchers have noticed the limitations to its expressive powers \cite{DKE10}. It has been discussed in several works that while some process models may have identical behavior profiles but produce different output traces.

As a result, a comprehensive relational semantics called 4C Spectrum has been proposed in \cite{4C14} to make up for the missing expressive power of BP. Nevertheless, although 4C Spectrum is strong in terms of expressive power, it is a heavy-weight solution and contains hundreds of complex relations (resulted from the combination of co-occurrence and causal/concurrent patterns), which is costly to compute and difficult for users to grasp the insight intuitively. As a result, \cite{4C14} has addressed the computation of only two of these relations, leaving most of the 4C Spectrum relations unavailable for application at this stage.

The above developments of relational semantics motivate us to look for a light-weight solution to significantly improve current relational semantics.  In \cite{CII12}, it has been suggested that similarity estimation might benefit a lot from the combined usage of BP and TAR. Also, it is worthy of notice that there are only 4 types of relations in total even if consider BP together with TAR. However, this idea has not been broadly applied yet because TAR computation based on reachability graphs is extremely low in efficiency, and users may not want to sacrifice the high efficiency of BP even though considering TAR  additionally may result in improvement of similarity estimation quality. To illustrate this, we will first demonstrate the advantage of TAR over BP. Motivated by this, we will go deeper into the issue of how to improve TAR computation efficiency.

\begin{figure}
  \label{fig:tarbp} 
  \centering
  \subfloat[][Parallel Structure]{
    \label{fig:tarbp:a} 
    \includegraphics[width=3.0in]{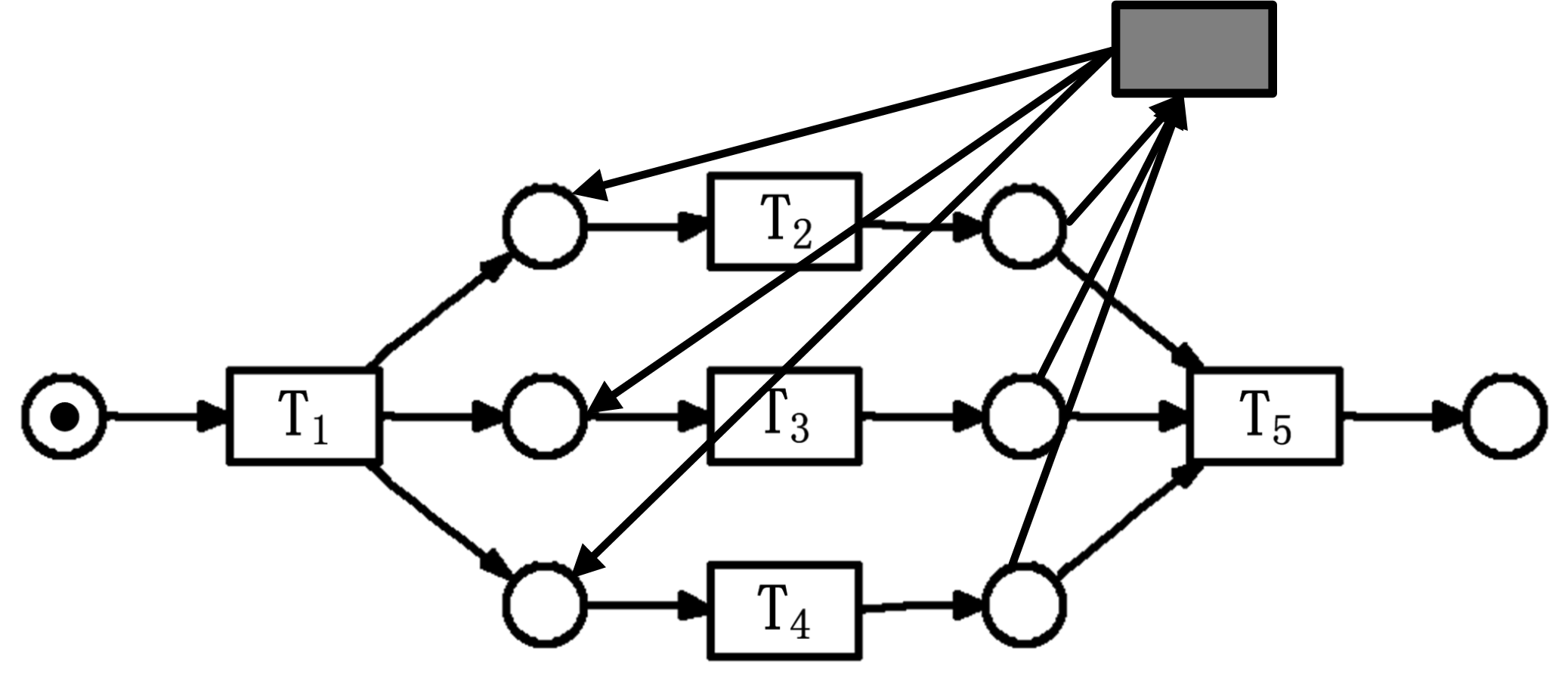}}
  \hspace{0.1in}
  \subfloat[][Loop Structure]{
    \label{fig:tarbp:b} 
    \includegraphics[width=3.4in]{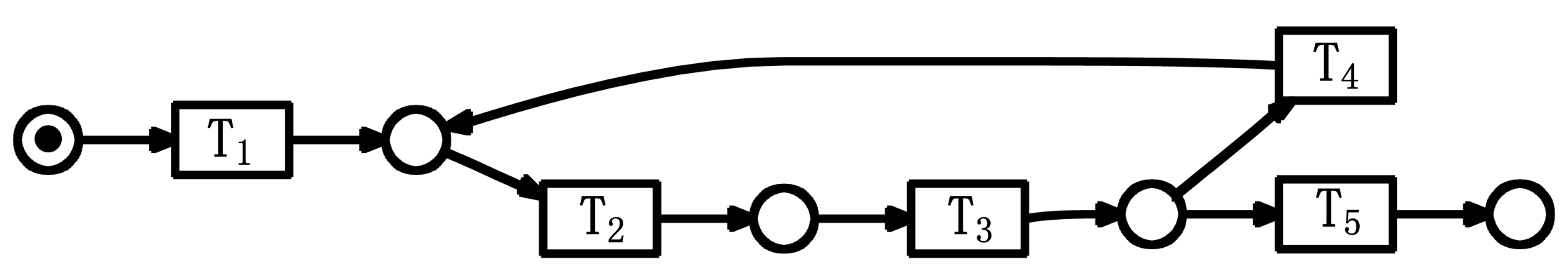}}
  \hspace{0.3in}
  \subfloat[][Sequential Structure]{
    \label{fig:tarbp:c} 
    \includegraphics[width=3.0in]{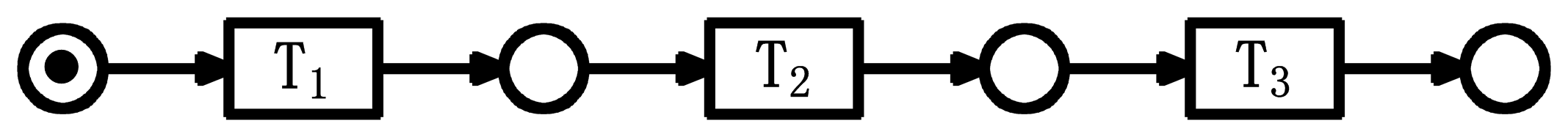}}
  \hspace{0.3in}
  \subfloat[][Sequential Structure with Skips]{
    \label{fig:tarbp:d} 
    \includegraphics[width=3.0in]{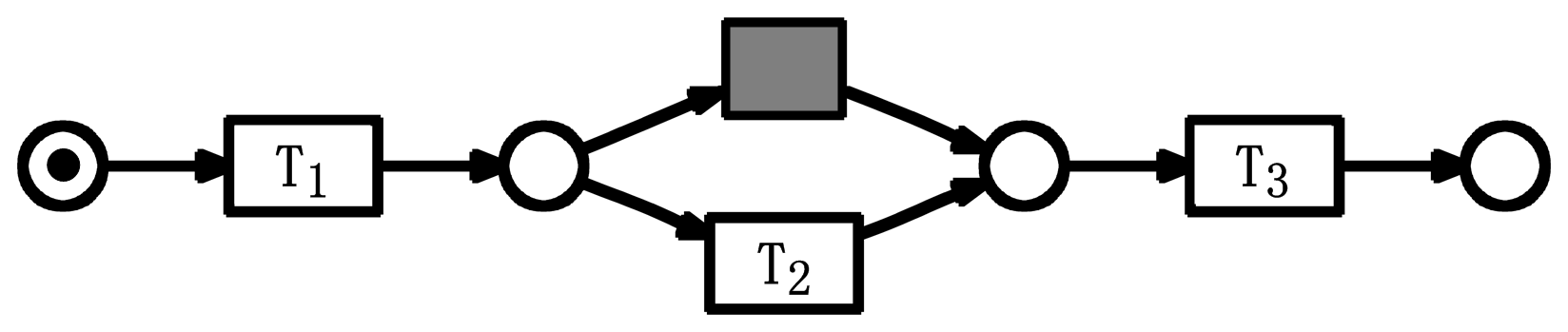}}\\
  \caption{(a) and (b) (as well as (c) and (d)) have identical BPs but different behaviors}
\end{figure}

In the above figure, the nets in Fig.~\ref{fig:tarbp:a} and Fig.~\ref{fig:tarbp:b} have identical BP (we refer readers to the original work of \cite{BP11} for definitions of BP), that is, for both nets we have $T_1\rightarrow \{T_2,T_3,T_4\}\rightarrow T_5$ and $\{T_2,T_3,T_4\}$ are mutually in $\parallel$. Also, for both nets in Fig.~\ref{fig:tarbp:c} and Fig.~\ref{fig:tarbp:d} it holds $T_1\rightarrow T_2\rightarrow T_3$. However, these nets clearly have different behaviors. $T_2,T_3,T_4$ in Fig.~\ref{fig:tarbp:a} form a parallel structure, whereas in Fig.~\ref{fig:tarbp:b} they form a loop. 

Compared to BP, which describes the global ordering between transitions, Transition Adjacency Relation (TAR) captures more `local' features among transitions that occur immediately one after another. This makes it a potential enhancement to BP based behavior description. For example, in Fig.~\ref{fig:tarbp:a}, $T_4$ may occur immediately after $T_2$ (denoted as $T_2<_{tar}T_4$), and $T_2$ may also occur immediately after $T_4$ ($T_4<_{tar}T_2$), but $T_4$ cannot occur immediately after $T_2$ in Fig.~\ref{fig:tarbp:b}. Moreover, $T_2$ in Fig.~\ref{fig:tarbp:d} can be skipped whereas $T_2$ in Fig.~\ref{fig:tarbp:c} cannot, which can be observed with TAR, as $T_3$ may occur immediately after $T_1$ ($T_1<_{tar}T_3$) in Fig.~\ref{fig:tarbp:d} but not in Fig.~\ref{fig:tarbp:c}.

As we have observed in the above examples, models with identical BP may have different TAR. Therefore, BP cannot be used to derive TAR even though unfolding-based BP computation has been addressed in \cite{CBP10}, because we cannot directly rely on BP information to compute TAR (the vice-versa is also true, so we are not arguing that TAR is a superior representation than BP, but it is a potentially useful extension to BP). Nevertheless, the CFP structure built during BP computation is already available for re-use during TAR computation.

Admittedly, apart from TAR, arbitrarily many other relational semantics can be incorporated with BP to extend its expressive power, such as the reduced event structures proposed in \cite{UNF01}. However, the trade-off between simplicity (for both computation and human understanding) and more expressive power must be taken into account. In this sense, any additional relational semantics that bring more expressive power at smaller cost of complexity is considered more desirable. Moreover, since BP are generally computed by building the complete finite prefix (CFP) of unfolding \cite{CBP10}, a preferable relational semantics for enhancement should be efficiently computable by re-using information from the CFP structure built previously.

As a consequence, we are motivated to put TAR's computation in the context of unfolding to save the marginal cost and establish the potential of TAR as such a desirable addition to BP. This article explores unfolding based TAR/pTAR computation in three stages. First, we confirm that TAR computation can generally be reduced to the classic coverability problem and therefore solvable using the `co' relations among conditions in the CFP , but the reduction of pTAR computation to coverability problems is not always trivial (Sect.~\ref{sec:General}). By re-using the causal information in the CFP more substantially, the computation of both TAR and pTAR can be further accelerated using a handful of derivation rules (Sect.~\ref{sec:More}). However, the incompleteness of unfolding information in CFP caused by cut-off events still restricts the applicability of these rules sometimes. Therefore, in Sect.~\ref{sec:ConUnf}, we describe our findings about a novel kind of finite extension of CFP, in which adequate information is preserved, so that TAR/pTAR can be derived from this extension completely using rules similar to those in Sect.~\ref{sec:More}.

An overview of our work is summarized in Fig. \ref{fig:overviewnew}. We set out to exploit all information available in CFP for TAR computation to the best of our knowledge, and achieved the leading performance among existing methods. With this work, we attempt to pave the way for further researches of the computation of TAR/pTAR and its application in business process behavior analysis, especially through combinations with other behavior abstractions such as BP.

\begin{figure*}
  \includegraphics[width=7.2in]{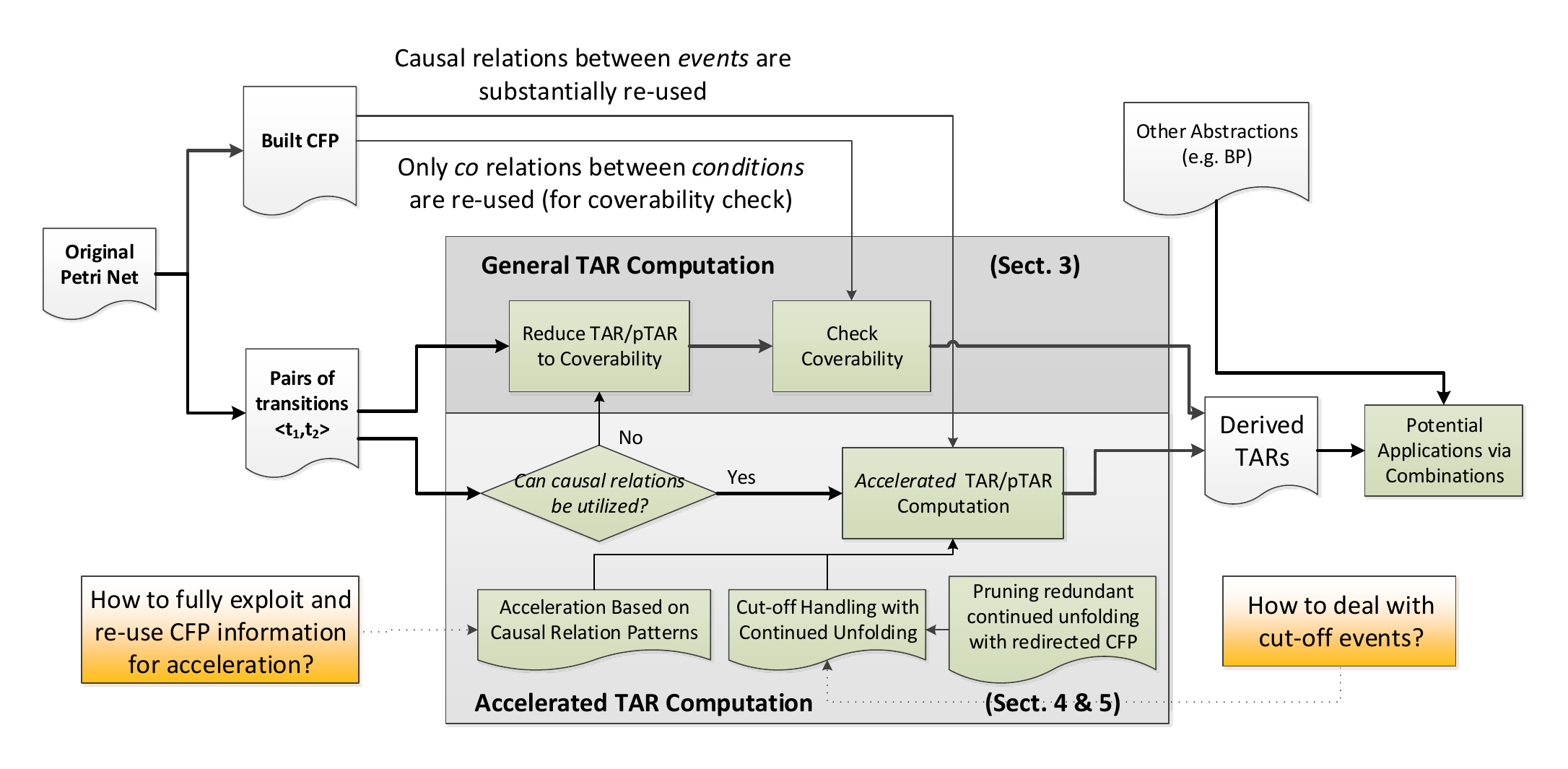}
  \caption{An overview of our research in unfolding-based TAR computation}
  \label{fig:overviewnew}
\end{figure*}

This paper is an extended version of our previous work~\cite{CTAR16}, but with significant amount of new contributions to the problem of transition adjacency relation (TAR) computation. A new type of cut-off criteria for constructing finite prefixes of Petri net unfolding has been provided in this paper to remove the necessity of costly coverability checking operation, which has been identified in our previous work~\cite{CTAR16} a major challenge to TAR computation. More in-depth proofs and experiments have been provided in this manuscript to make the proposed TAR/pTAR computation techniques more solid. Furthermore, we have provide additional business process model retrieval experiments in this version to demonstrate the necessity and application potential of combined usage of TAR and behavior profiles in this extended version.

The remaining of this paper (excepting Sect.~\ref{sec:General},~\ref{sec:More}, and~\ref{sec:ConUnf}) is organized as follows. Section~\ref{sec:Eval} evaluates the TAR/pTAR computation performance on real-world data and the scalability of our algorithms. In Section~\ref{sec:Comb}, we show the effectiveness of the combination of TAR and BP by a concrete experiment on process model querying. Section~\ref{sec:Conc} concludes our work.

\section{Preliminary Concepts}\label{sec:PreCon}

We first recall the basic definitions on Petri net syntax and semantics \cite{PN89}.

\begin{definition}[Petri net]\label{def:PetriNet}
A Petri net is a triple $N=(P,T,F)$, where P and T are finite set of places and transitions respectively $(P\cap
T=\emptyset$ and $P\cup T\neq \emptyset)$, and $F\subseteq (P\times T)\cup(T\times P)$ is a set of arcs (flow relation). Denote $X=P\cup T$, for a node $x\in X$, $\bullet x=\{y\in X|(y,x)\in F\}$, $x\bullet=\{y\in X|(x,y)\in F\}$.
\end{definition}

\begin{definition}[Petri net semantics]\label{def:FireRule}
Let $N=(P,T,F)$ be a Petri net, then
\begin{itemize}
\item $M:P\rightarrow \mathbb{N}$ is a marking of $N$, $\mathbb{N}$ is the set of non-negative integers.
$\mathbb{M}$ denotes all possible markings of $N$. $M(p)$ denotes the number of tokens in place $p$.
\item A transition t is enabled in $M$, iff $\forall p\in\bullet t: M(p)\geq 1$. If t is enabled at $M$ then it can be fired, and then a new marking $M'$ is reachable from $M$ by firing t, denoted by $M\xrightarrow{t}M'$, such that $M'=(M\backslash\bullet t)\cup t\bullet$.
\item A net system is a pair $S=(N,M_{0})$, where $N$ is a net and $M_{0}$ is the initial marking of $N$.
\item A firing sequence $\sigma=\langle t_{1},...,t_{n-1}\rangle$ leads $N$ from marking $M_{1}$ to marking $M_{n}$  such that $M_{1}\xrightarrow{t_{1}}M_{2}\xrightarrow{t_{2}}M_{3}...M_{n-1}\xrightarrow{t_{n-1}}M_{n}$, also denoted as $M_{1}\xrightarrow{\sigma}M_{n}$. We use $\sigma(i)$ to denote the $i^{th}$ transition in sequence $\sigma$.
\item For a net system $S(N,M_0)$, $M$ is a reachable marking if from initial marking $M_0$ there exists a firing sequence $\sigma$ such that $M_{0}\xrightarrow{\sigma}M$.
\item A net system is \textbf{bounded} if it has a finite number of reachable markings.
\item $S(N,M_0)$ is \textbf{1-safe} iff for each reachable marking $M$, $M(p)\leq$1 for every place p, and is \textbf{free-choice} iff $\forall t_1,t_2\in T:\bullet t_1\cap\bullet t_2\neq\emptyset\Rightarrow \bullet t_1=\bullet t_2$.
\end{itemize}
\end{definition}

The unfolding of a Petri net is derived from its related Occurrence Net, whose definition is based on the concepts of \emph{causal}, \emph{conflict}, and \emph{concurrency} relations between Petri net nodes \cite{CPU02}. In a Petri net, two nodes $x$ and $y$ are in \emph{causal} relation, denoted by $x < y$, if the net contains a path with at least one arc leading from $x$ to $y$. Furthermore, $x$ and $y$ are in \emph{conflict} relation, denoted by $x\# y$, if the net contains two paths $s\cdot t_1...\cdot x_1$ and $s\cdot t_2...\cdot x_2$ starting at the same place $s$, and such that $t_1\neq t_2$. Finally, $x$ and $y$ are in \emph{concurrency} relation, denoted by $x$ \emph{co} $y$, if neither $x < y$ nor $y < x$ nor $x\# y$.
\begin{definition}[Occurrence net]\label{def:OccurrenceNet}
An occurrence net is a net $O=(C,E,G)$, in which places are called conditions ($C$), transitions are called events ($E$), and:
\begin{itemize}
\item $O$ is acyclic, or equivalently, the causal relation is a partial order.
\item $\forall c\in C:|\bullet c| \leq 1$, and for all $x\in C\cup E$ it holds $\neg(x\#x)$ and the set $\{y\in C\cup E|y<x\}$ is finite.
\item For an occurrence net $O=(C, E, G)$, $Min(O)$ denotes the set of minimal elements of $C\cup E$ with respect to $<$.
\end{itemize}
\end{definition}

The relation between a net system $S=(N,M_{0})$ with $N=(P,T,F)$ and an occurrence net $O=(C,E,G)$ is defined as a homomorphism
$h:C\cup E\longmapsto P\cup T$ such that $h(C)\subseteq P$ and $h(E)\subseteq T$.

A \emph{branching process} of $S = (N,M_0)$ is a tuple $\pi= (O, h)$ with $O=(C, E, G)$ being an occurrence net and $h$ being a homomorphism from $O$ to $S$. The maximal branching process of $S$ is called unfolding \cite{CPU02}. The unfolding of a net system can be truncated once all markings of the original net system and all enabled transitions are represented. This yields the complete finite prefix (abbr. CFP).

\begin{definition}[Complete Finite Prefix]\label{def:CPU}
Let $S=(N,M_0)$ be a system and $\pi=(O,h)$ a branching process with $N=(P,T,F)$ and $O=(C,E,G)$.
\begin{itemize}
\item A set of events $E'\subseteq E$ is a configuration, iff $\forall e, f\in E': \neg(e\#f)$ and $\forall e\in E': f<e \Rightarrow f\in E'$. The local configuration $[e]$ for an event $e\in E$ is defined as $\{x\in E|x<e\vee x=e\}$.
\item A set of conditions $X\subseteq C$ is called \emph{co}-set, iff for all distinct $c_1, c_2\in X$ it holds $c_1$ co $c_2$. If $X'$ is maximal w.r.t. set inclusion, it is called a cut. For a finite configuration $E'$, $Cut(E')=(Min(O)\cup E'\bullet)\backslash\bullet E'$ is a cut, while $h(Cut(E'))$ is a reachable marking of $S$, denoted as $Mark(E')$, e.g. $Mark([e])$ is the set of places that are marked after all events in $[e]$ are fired.
\item An adequate order $\prec$ is a strict well-founded partial order on local configurations such that for two events $e, f\in E$, $[e]\subset[f]$ implies $[e]\prec[f]$.
\item An event $e$ is a cut-off event if there exists another event $e'$ such that $Mark([e])=Mark([e'])$ and $[e']\prec[e]$. $e'$ is the corresponding event of $e$, denoted as $e'=corr(e)$.
\item A complete finite prefix (CFP) is the greatest backward-closed subnet of a branching process containing no events after any cut-off event.
\end{itemize}
\end{definition}
\begin{example}

\begin{figure}
  \centering
  \subfloat[][Original Net]{
    \label{fig:explod:a} 
    \includegraphics[width=3in]{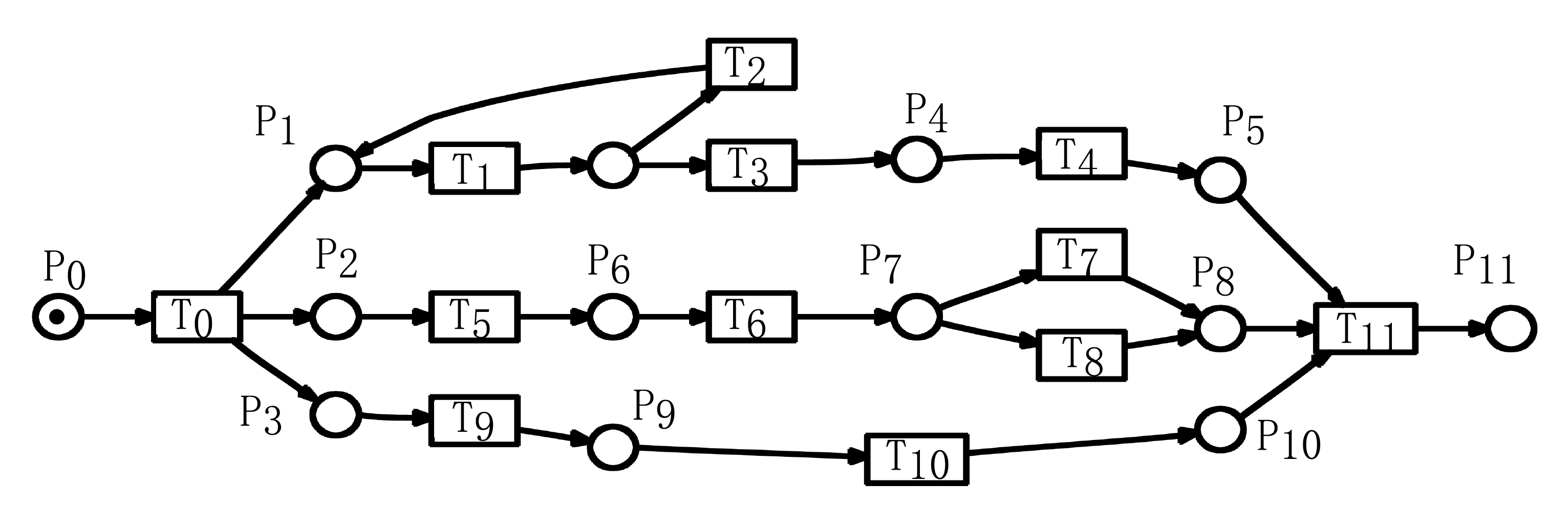}}
  \hspace{0.1in}
  \subfloat[][Reachability Graph]{
    \label{fig:explod:c} 
    \includegraphics[width=2.3in]{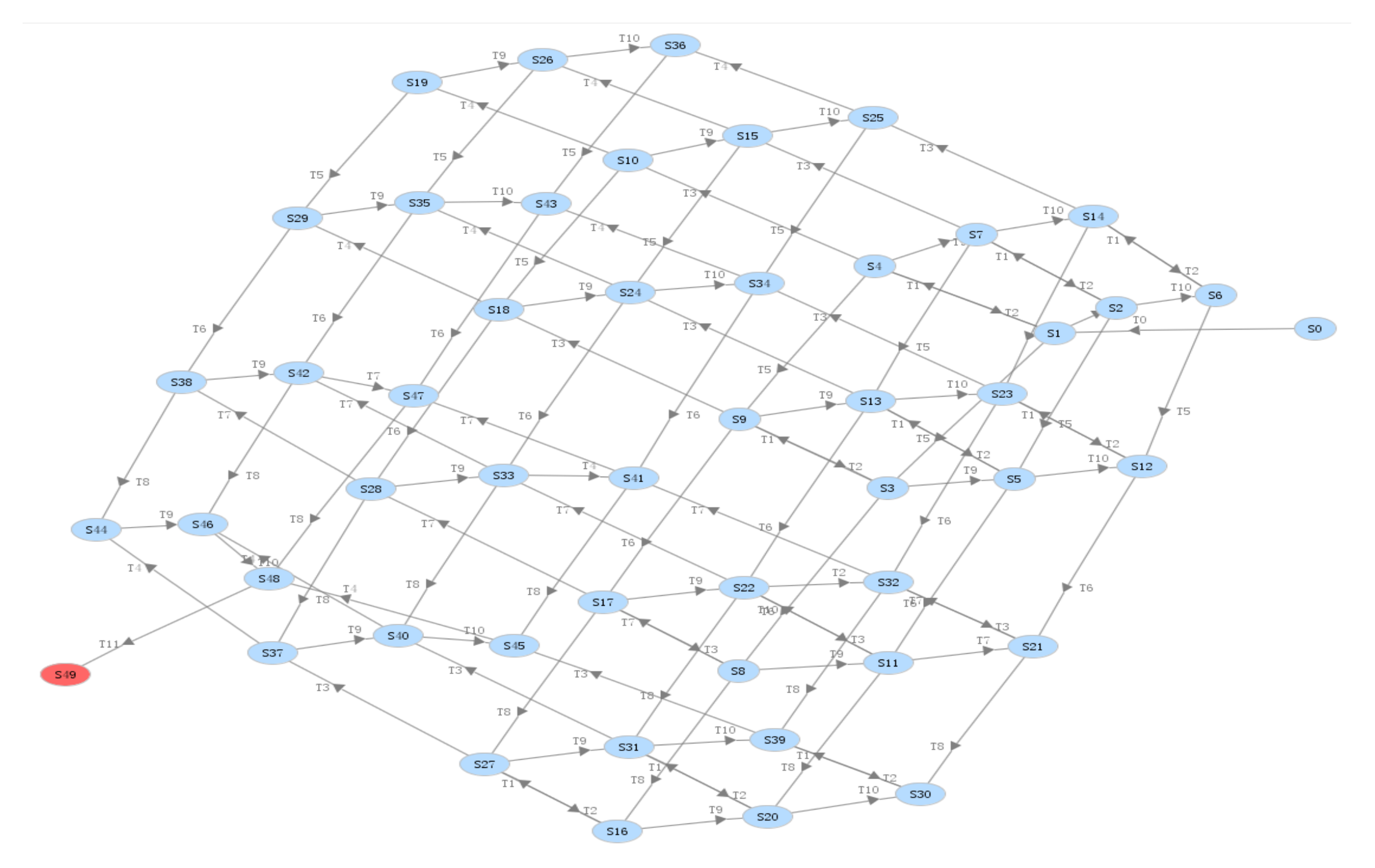}}
  \hspace{0.3in}
  \subfloat[][Corresponding CFP of \textbf{(a)}]{
    \label{fig:explod:b} 
    \includegraphics[width=3.3in]{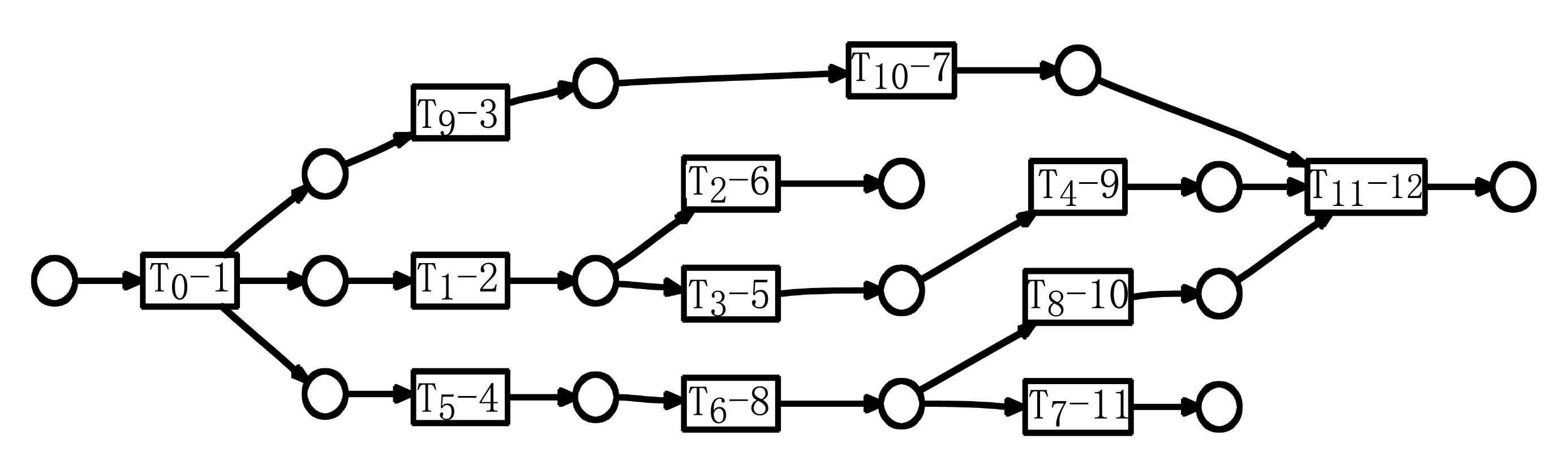}}

  \label{fig:explod} 
  \caption{CFP as a more compact representation for Petri net behavior analysis}
\end{figure}

Fig.~\ref{fig:explod:a}, \ref{fig:explod:c} and \ref{fig:explod:b} illustrate a Petri net, its related reachability graph and  CFP respectively (Fig.~\ref{fig:explod:c} is not supposed to be readable, only intended to illustrate the complexity of the reachability graph). Although there are multiple solutions as Fig.~\ref{fig:explod:a}'s CFP, we only show one of them here for simplicity. In this paper, we name each event with its related transition name, \textbf{which is followed by `$\mbox{-}$' and then its order of appearance in the CFP.} For example, in Fig.~\ref{fig:explod:b}, $T_8\mbox{-}10$ is an event that maps to $T_8$ of the original net ($h(T_8\mbox{-}10)=T_8$), and it is the tenth event constructed in the CFP. In Fig.~\ref{fig:explod:b}, $[T_8\mbox{-}10]=\{T_0\mbox{-}1,T_5\mbox{-}4,T_6\mbox{-}8,T_8\mbox{-}10\}$, $[T_7\mbox{-}11]=\{T_0\mbox{-}1,T_5\mbox{-}4,T_6\mbox{-}8,T_7\mbox{-}11\}$, and $Mark([T_8\mbox{-}10])$$=Mark([T_7\mbox{-}11])$$=\{P_{1},P_{3},P_{8}\}$, thus $T_7\mbox{-}11$ is a \emph{cut-off} event, and $T_8\mbox{-}10$ is its \emph{corresponding event}. $T_2\mbox{-}6$ is a \emph{cut-off} event caused by looping, whose corresponding event is $T_0\mbox{-}1$.
\end{example}

A reachability graph (RG) is prone to encounter state-explosions with concurrent behaviors (because it enumerates all possible execution traces), whereas the CFP of a Petri net only `outlines' the net's behavior and is much more compact, as illustrated in Fig. \ref{fig:explod:c} and \ref{fig:explod:b}.

In this paper, we adopt Esparza's algorithm \cite{CPU02} for CFP construction, which can generate more compact CFPs compared to the original CFP construction technique. We also refer readers to \cite{CPU02} for more examples on unfolding.

Let us now present the definition of Transition Adjacency Relation (TAR) and Projected Transition Adjacency Relation (pTAR):

\begin{definition}[Transition Adjacency Relation]\label{def:TAR}
Let $S=(N,M_0)$ be a net system. Let $t_1, t_2$ be two transitions of S. We say that $t_1, t_2$ are in Transition Adjacency Relation, denoted as $t_1 <_{tar} t_2$, if there exist a reachable marking $M_s$ of $S$, where $t_1$ is enabled, and $M_s\xrightarrow{t_1}M_s'$, such that $t_2$ is enabled at $M_s'$.
\end{definition}

\begin{definition}[Projected Transition Adjacency Relation]\label{def:STAR} Let $t_1,t_2$ be two non-silent transitions of a net system $S(N,M_0)$. We say that $t_1$ and $t_2$ are in projected Transition Adjacency Relation (pTAR), denoted as $t_1<_{ptar}t_2$ iff from $M_0$ a sequence $\sigma$ can be fired such that $\sigma(i)=t_1,\sigma(j)=t_2$ for some $1\leq i<j$ and $\sigma(k)$ are silent transitions for all $i<k<j$.
\end{definition}

\begin{figure}
  \caption{Illustration of TAR and pTAR}
  \includegraphics[width=3.2in]{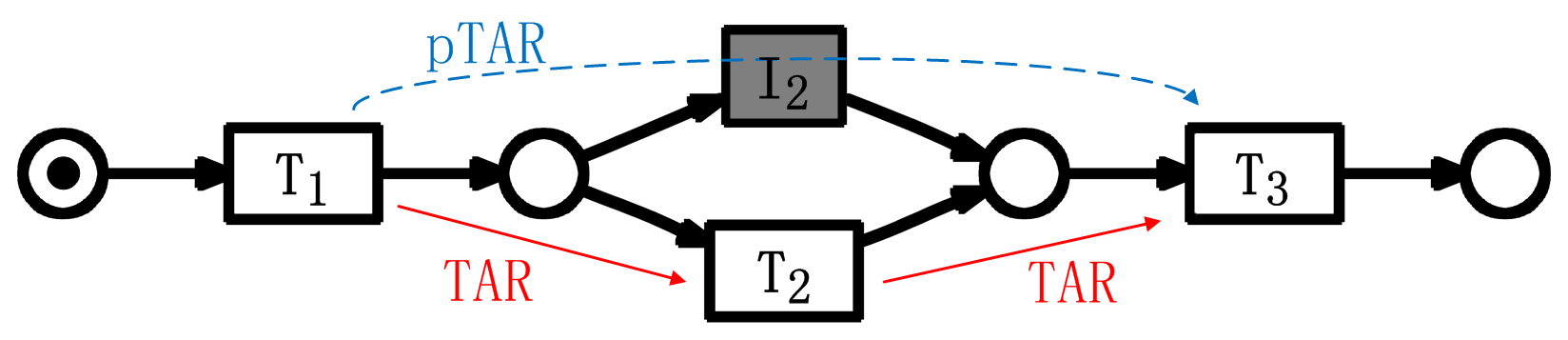}
  \label{fig:TARpTAR}
\end{figure}

Intuitively, TAR says `some pair of transitions can occur immediately one after another', and pTAR (Projected TAR, or Silent TAR) says `some pair of transitions can occur immediately one after another, if the silent transitions occurring between them are neglected'. For example, in Fig.~\ref{fig:TARpTAR}, $T_1$ and $T_2$, $T_2$ and $T_3$ are in TAR, denoted as $T_1<_{tar}T_2,T_2<_{tar}T_3$, and $T_1$ are in pTAR with $T_3$ ($T_1<_{ptar}T_3$), as we can skip $T_2$ by firing the silent transition $I_2$. The definition is equivalent with the the one in \cite{pTAR12}, where projected transitions are treated as non-silent transitions and the rest as silent transitions.



\section{Baseline Approach of TAR Computation from CFP}\label{sec:General}

The CFP of a Petri net contains various information that we can exploit for TAR computation. In this section, we look at how to exploit the $co$ relations between \emph{conditions} in CFP, by reducing TAR computation to coverability checking.

\subsection{Reduction of TAR Computation to Coverability Problem}

First we show that the computation of TAR can be reduced to the classic coverability problem, that is, checking if a certain set of places can be `covered' by some reachable marking of a Petri net. We prove this as follows.

\begin{theorem}\label{thm:coverable} For two transitions $t_1,t_2$ in a bounded net system: $t_1<_{tar}t_2\Leftrightarrow (\bullet t_2\backslash t_1\bullet)\cup\bullet t_1$ is coverable.
\end{theorem}

\begin{proof}
By definition of TAR, we have $t_1<_{tar}t_2$ iff there exists a reachable marking $M: M\xrightarrow{t_1}M'\xrightarrow{t_2}M''$. Then $M\xrightarrow{t_1}M'\xrightarrow{t_2}M''$ holds if and only $\bullet t_2\subseteq M'$ and $\bullet t_1\subseteq M$, which holds if and only if $(\bullet t_2\backslash t_1\bullet)\cup\bullet t_1\subseteq M$.

Since $M$ is a reachable marking of $S$, by the above reasoning we have that $t_1<_{tar}t_2\Leftrightarrow (\bullet t_2\backslash t_1\bullet)\cup\bullet t_1$ is coverable.
\end{proof}

In CFP, a coverable marking of the net system corresponds with a set of conditions that are mutually in $co$ relations (i.e., a $co$-set) \cite{UNF01}. Consequently, the coverability problem can be solved efficiently based on the $co$ relations between conditions in CFPs using SAT solvers \cite{UNF95}. Therefore, a general algorithm of unfolding-based TAR computation can be given as Algorithm~\ref{alg:basic}.

\begin{algorithm}
  \caption{General TAR Derivation Using CFP (GENERAL)}\label{alg:basic}
  \begin{algorithmic}[1]
    \Function {GENERAL}{$S,\pi$}
        \State $TAR\leftarrow\emptyset$
        \State \textbf{for}{\textbf{ each} $t_1,t_2\in S$} \textbf{do}
            \State \indent Check coverability of $(\bullet t_2\backslash t_1\bullet)\cup\bullet t_1$ based on $\pi$ using SAT solvers
            \State \indent\textbf{if} $(\bullet t_2\backslash t_1\bullet)\cup\bullet t_1$ is coverable \textbf{then} $TAR\leftarrow TAR\cup \{\langle t_1,t_2\rangle\}$
       \State return $TAR$
    \EndFunction
  \end{algorithmic}
\end{algorithm}

\subsection{Reduction of pTAR to Coverability And Its Non-triviality}

Unlike the original TAR, the checking of projected TARs (pTARs) between a given pair of transitions $t_1,t_2$ cannot be trivially reduced to coverability problem. Unlike in Thm. \ref{thm:coverable}, where $t_2$ can be fired immediately after $t_1$, it is probable that after firing $t_1$, a sequence of silent transitions have to be fired before $t_2$ is fired. In this case, suppose we want to reduce the checking of pTAR to the coverability problem, we need to know the firing sequence between $t_1$ and $t_2$. In other words, we need to determine a sequence $\sigma=\langle t_{i_1},...,t_{i_k}\rangle$ such that there exists a reachable marking $M$ such that $M\xrightarrow{t_1}M_1\xrightarrow{t_{i_1}}M_2\xrightarrow{t_{i_2}}...M_k\xrightarrow{t_{i_k}}M_{k+1}\xrightarrow{t_2}M_{k+2}$. By the same reasoning in Thm. \ref{thm:coverable}, $P_k=(\bullet t_2\backslash t_{i_k}\bullet)\cup\bullet t_{i_k}$ must be covered by $M_{k}$, $P_{k-1}=(P_k\backslash t_{i_{k-1}}\bullet)\cup\bullet t_{i_{k-1}}$ must be covered by $M_{k-1}$, etc. So the pTAR checking between $t_1$ and $t_2$, with respect to some $\sigma$, can be reduced to the coverability checking of $P_1=(...(((\bullet t_2\backslash t_{i_k}\bullet)\cup\bullet t_{i_k})\backslash t_{i_{k-1}}\bullet)\cup\bullet t_{i_{k-1}}...\backslash t_1\bullet)\cup\bullet t_1$.

However, although pTAR checking can be reduced to coverability problem in this way, the determination of a concrete $\sigma$, which is the prerequisite for such reduction, is non-trivial, because there can be numerous possible firing paths between a pair of transitions.

\section{Exploiting More CFP Relations for Accelerations}\label{sec:More}

Although being complete and correct, Algorithm~\ref{alg:basic} includes the costly operation of coverability checking, and pTAR checking can not be reduced to coverability problem at trivial costs.

In Algorithm~\ref{alg:basic}, since TAR computation is reduced to coverability checking, only $co$ relations between conditions are used, and the rest of the CFP (causal relations and $co$ between events) are not utilized. We believe these unused information will be exactly the keys to accelerate TAR computation. By dividing the TAR cases into the following two categories, we may construct more connections between TAR and information in CFP:\\
\indent\indent\indent$(1)~t_1\bullet\cap\bullet t_2 = \emptyset$ (\textbf{non-consecutive}),\\
\indent\indent\indent$(2)~t_1\bullet\cap\bullet t_2\neq\emptyset$ (\textbf{consecutive}).

In case of $(1) t_1\bullet\cap\bullet t_2 = \emptyset$, since $t_1$ and $t_2$ in this case are non-consecutive, we have the intuition that TARs between non-consecutive transitions corresponds to $co$ relations in CFP. Fortunately, we find that such TARs can always be identified in CFP as $co$ relations because of the following useful property of CFP.

\begin{property}\label{prop:first} For every reachable marking $M$ of a Petri net system $S$, there exists a configuration $\mathcal{C}$ in a corresponding CFP $\pi$ of $S$ such that $Mark(\mathcal{C})=M$ and $\mathcal{C}$ does not contain any cut-off event.
\end{property}
\begin{proof}
Given a configuration $\mathcal{C}$, we denote by $\mathcal{C} \oplus E$ the fact that $\mathcal{C} \cup E$ is a
configuration such that $\mathcal{C} \cap E = \emptyset$. We say that $\mathcal{C} \oplus E$ is an extension of $\mathcal{C}$, and that $E$ is a suffix of $\mathcal{C}$. Let $\mathcal{C}_1$ be a configuration in the unfolding of $S$ such that Mark($\mathcal{C}_1$)=$M$. Suppose $\mathcal{C}_1$ contains some cut-off event e1 and that $\mathcal{C}_1$ = [$e_1$]$\oplus E_1$ for some set of events $E_1$. By the definition of a cut-off event, there exists a local configuration [$e_2$] such that [$e_2$] $\prec$ [$e_1$] and Mark([$e_2$]) = Mark([$e_1$]).According to the original theory of Petri net unfolding, there is a set of events isomorphic (corresponding to the same set of transitions) with $E_1$, denoted as $I_1^2(E_1)$, such that Mark([$e_2$]$\oplus I_1^2(E_1))=$Mark($\mathcal{C}_1$). Consider the configuration $\mathcal{C}_2$ = [$e_2$]$\oplus I_1^2(E_1)$. Since $\prec$ is preserved by finite extensions, we have $\mathcal{C}_2\prec \mathcal{C}_1$. If $\mathcal{C}_2$ still contains any cut-off event, we can repeat the procedure and find a configuration $\mathcal{C}_3$ such that $\mathcal{C}_3 \prec \mathcal{C}_2$ and that Mark($\mathcal{C}_3$)=Mark($\mathcal{C}_2$). The procedure cannot be iterated infinitely often because $\prec$ is well-founded. Therefore, it terminates in a configuration $\mathcal{C}$ in CFP which does not contain any cut-off event and Mark($\mathcal{C}$) = $M$.
\end{proof}

\begin{theorem}\label{prop:co}
Let $S$ be a bounded net system and $\pi=(O,h)$ its CFP, where $O=(C,E,G)$ . Let $t_1,t_2$ be two transitions of $S$. When $t_1\bullet\cap\bullet t_2=\emptyset$, we have $t_1<_{tar}t_2\Leftrightarrow\exists e_1,e_2\in E:e_1$ co $e_2$, $h(e_1)=t_1, h(e_2)=t_2$.
\end{theorem}

\begin{proof}
($\Rightarrow$) By $t_1<_{tar}t_2$, there are markings $M_s, M_s'$ in $S$ such that $M_s\xrightarrow{t_1}M_s'$, and that $t_2$ is enabled at $M_s'$. Since $t_1\bullet\cap\bullet t_2=\emptyset$, we know both $\bullet t_1$ and $\bullet t_2$ have tokens in them at $M_s$. By Property \ref{prop:first}, there exists a configuration $C$ in $\pi$ such that $C$ does not contain any cut-off event and that $Mark(C)=M_s$. Therefore $\exists X_1\subseteq Cut(C):h(X_1)=\bullet t_1$. Since $C$ does not contain cut-off events, there exists in $\pi$ an event $e_1:h(e_1)=t_1, \bullet e_1=X_1$. Similarly, because $t_2$ is enabled at $M_s'$, $\exists e_2\in\pi:h(e_2)=t_2,\bullet e_2=X_2\subseteq Cut(C)$ and $X_1\cap X_2=\emptyset$. Therefore, it holds that $e_1\not <e_2$ and $\neg(e_1\#e_2)$, and consequently, $e_1$ $co$ $e_2$. ($\Leftarrow$) For some $e_1,e_2\in E:e_1$ co $e_2$, $h(e_1)=t_1, h(e_2)=t_2$, we know both $\bullet e_1$ and $\bullet e_2$ are co-set and $\forall c_1\in\bullet e_1, c_2\in\bullet e_2: c_1$ co $c_2$. Therefore $\bullet e_1\cap\bullet e_2=\emptyset$ and $\bullet e_1\cup\bullet e_2$ is a co-set, so there exists a configuration $C$ in $\pi$ such that $\bullet e_1\cup\bullet e_2\subseteq Cut(C)$. As $h(\bullet e_1)=\bullet t_1$ and $h(\bullet e_2)=\bullet t_2$, both $t_1, t_2$ are enabled at $Mark(C)$. And after $t_1$ is fired, $t_2$ is still enabled (because $\bullet e_1\cap\bullet e_2=\emptyset$). Therefore we have $t_1<_{tar}t_2$.
\end{proof}

Now we consider the case of $(2)t_1\bullet\cap\bullet t_2\neq\emptyset$. To solve this case, we introduce the concept of Max-Event Set.

\begin{definition}
[Max-Event Set] Let $E_0$ be a subset of the events in a CFP. We call the set of maximal events in $E_0$ with respect to causal relation as its max-event set, denoted as $Max(E_0)=\{e\in E_0|\forall e' \in E_0: e\not< e'\}$.
\end{definition}

Given this definition, we have discovered the following results (see \cite{CTAR17}).

\begin{lemma}\label{lemm:first}
Let $\mathcal{C}$ be a configuration. Let $Max(\mathcal{C})=\{e_{m_1},...,e_{m_k}\}$. Then
\begin{itemize}
\item[(1)] $\mathcal{C}\backslash\{e\}$ is a configuration if and only if $e\in Max(\mathcal{C})$.
\item[(2)] $[e_{m_1}]\cup...\cup[e_{m_k}]=\mathcal{C}$.
\end{itemize}
\end{lemma}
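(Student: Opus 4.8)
The plan is to treat the two parts separately, both leaning on two elementary closure properties of a configuration $C$: it is conflict-free ($\forall e,f\in C:\neg(e\#f)$), and it is causally (backward) closed ($g<f$ and $f\in C$ imply $g\in C$). Throughout I read part (1) with the standing assumption $e\in C$, since otherwise $C\backslash\{e\}=C$ is trivially a configuration while $e\notin Max(C)$, so the biconditional is only meaningful for candidate events drawn from $C$.

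For part (1) I would argue each direction directly. For the ``if'' direction, assume $e\in Max(C)$. Conflict-freeness of $C\backslash\{e\}$ is immediate, since any subset of a conflict-free set is conflict-free. For causal closure, take $f\in C\backslash\{e\}$ and $g<f$; since $C$ is closed we get $g\in C$, and if we had $g=e$ then $e<f$ with $f\in C$ would contradict $e\in Max(C)$, so $g\neq e$ and hence $g\in C\backslash\{e\}$. For the ``only if'' direction I would use the contrapositive: if $e\notin Max(C)$ then there is $e'\in C$ with $e<e'$; by irreflexivity of $<$ (acyclicity of the occurrence net) we have $e'\neq e$, so $e'\in C\backslash\{e\}$, and causal closure of $C\backslash\{e\}$ applied to $e<e'$ would force $e\in C\backslash\{e\}$, a contradiction. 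Hence $C\backslash\{e\}$ fails to be a configuration.

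For part (2), the inclusion $[e_{m_1}]\cup\dots\cup[e_{m_k}]\subseteq C$ is the easy half: each $e_{m_i}\in Max(C)\subseteq C$, and by causal closure every $x$ with $x<e_{m_i}$ also lies in $C$, so $[e_{m_i}]=\{x\mid x<e_{m_i}\vee x=e_{m_i}\}\subseteq C$. The reverse inclusion carries the real content. Writing $x\le y$ as shorthand for $x<y\vee x=y$, I would fix $x\in C$ and consider the up-set $U_x=\{y\in C\mid x\le y\}$, which is nonempty since $x\in U_x$. Because the complete prefix (and hence $C$) is finite, $U_x$ admits a $<$-maximal element $z$. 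I then claim $z\in Max(C)$: if some $w\in C$ satisfied $z<w$, transitivity with $x\le z$ would give $x<w$, so $w\in U_x$, contradicting maximality of $z$ in $U_x$. Thus $z=e_{m_i}$ for some $i$, and from $x\le z=e_{m_i}$ we conclude $x\in[e_{m_i}]$, giving $C\subseteq\bigcup_i[e_{m_i}]$.

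The main obstacle is this reverse inclusion in part (2): it is not enough to note that $x$ has \emph{some} causal successors in $C$; one must exhibit a successor that is actually \emph{maximal} in $C$, so that it belongs to $Max(C)$. This is precisely where finiteness of the complete prefix is essential, as it guarantees that $U_x$ terminates at a maximal element rather than admitting an infinite ascending chain. (Equivalently one could invoke well-foundedness of $<$ read upward on $C$, but finiteness of the CFP is the cleanest justification.) Everything else reduces to careful but routine applications of conflict-freeness and causal closure.
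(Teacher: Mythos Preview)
Your proposal is correct and follows essentially the same approach as the paper. For part (1) both you and the paper verify conflict-freeness and causal closure directly for the ``if'' direction and use the contrapositive (exhibiting $e'\in C$ with $e<e'$) for the ``only if'' direction; for part (2) both argue the easy inclusion via causal closure and the reverse inclusion by climbing from $x$ along $<$ inside $C$ until a maximal event is reached---the paper phrases this as ``continue to reach some $e_{m_i}$'' while you formalize it via the up-set $U_x$ and an explicit appeal to finiteness, but the content is the same.
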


\begin{lemma}\label{lemm:second}
Let $X$ be a \emph{co}-set in a CFP. Let $Max(\bullet X)=\{e_{m_1},...,e_{m_k}\}$. Then $\mathcal{C}=[e_{m_1}]\cup...\cup[e_{m_k}]$ is a configuration, and $X\subseteq Cut(\mathcal{C})$, $Max(\mathcal{C})=Max(\bullet X)$.
\end{lemma}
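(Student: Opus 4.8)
The plan is to read the statement as three separate claims — that $C$ is a configuration, that $Max(C)=Max(\bullet X)$, and that $X\subseteq Cut(C)$ — and to prove them in that order, since each later claim reuses structure built by the earlier ones. The one auxiliary fact I would extract first and then use throughout is a direct consequence of $X$ being a co-set: if $c,c'\in X$ are distinct conditions with $\bullet c=\{e\}$ and $\bullet c'=\{e'\}$, then $\neg(e\#e')$. This holds because conflict is inherited by causal successors, so $e\#e'$ would propagate to $c\#c'$, contradicting $c$ co $c'$. In particular the events $e_{m_1},\dots,e_{m_k}\in Max(\bullet X)\subseteq\bullet X$ are pairwise non-conflicting, which is exactly what I need for the configuration claim.

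For the first claim I would verify the two defining conditions of a configuration directly on $C=[e_{m_1}]\cup\dots\cup[e_{m_k}]$. Downward-closure is immediate: each $[e_{m_i}]$ is downward-closed, and if $e\in C$ with $e''<e$, then $e\in[e_{m_i}]$ for some $i$ forces $e''<e_{m_i}$, hence $e''\in[e_{m_i}]\subseteq C$. Conflict-freeness is where the auxiliary fact is needed together with conflict propagation: given $e,e'\in C$ with $e\in[e_{m_i}]$ and $e'\in[e_{m_j}]$, assuming $e\#e'$ and pushing the conflict forward (via the rule $x\#y,\,y<z\Rightarrow x\#z$) along $e<e_{m_i}$ and then $e'<e_{m_j}$ yields $e_{m_i}\#e_{m_j}$, contradicting the pairwise non-conflict of the max-events established above; the boundary cases $e=e_{m_i}$ or $e'=e_{m_j}$ only shorten this chain.

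For the second claim I would prove both inclusions. Each $e_{m_i}$ lies in $C$ and is maximal there, since $e_{m_i}<e\in C$ would give $e\le e_{m_j}$ and thus $e_{m_i}<e_{m_j}$ with both in $\bullet X$, contradicting maximality of $e_{m_i}$ in $\bullet X$. Conversely, any $e\in Max(C)$ satisfies $e\in[e_{m_i}]$ for some $i$; maximality of $e$ in $C$ forbids $e<e_{m_i}$, so $e=e_{m_i}$. Hence $Max(C)=\{e_{m_1},\dots,e_{m_k}\}=Max(\bullet X)$.

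The last claim, $X\subseteq Cut(C)=(Min(O)\cup C\bullet)\setminus\bullet C$, is where I expect the real work, and I would split each $c\in X$ into ``$c\in Min(O)\cup C\bullet$'' and ``$c\notin\bullet C$.'' The first part is routine: if $\bullet c=\emptyset$ then $c\in Min(O)$; otherwise its producing event $e\in\bullet X$ lies below some maximal event, so $e\in[e_{m_i}]\subseteq C$ (here I rely on finiteness of the prefix, so that every element of $\bullet X$ sits below a maximal one), whence $c\in C\bullet$. The hard, genuinely co-set-dependent step is $c\notin\bullet C$: if $c\in\bullet e'$ for some $e'\in C$, then $c<e'\le e_{m_j}$ for some $j$, while $e_{m_j}\in\bullet X$ produces some $c'\in X$, giving $c<e_{m_j}<c'$, i.e. $c<c'$. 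Since $c,c'\in X$, this contradicts $X$ being a co-set when $c\ne c'$ and violates acyclicity when $c=c'$. This contradiction is the crux of the whole lemma; once it is in place, assembling the three claims finishes the proof.
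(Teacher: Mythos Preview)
Your proposal is correct and follows essentially the same approach as the paper's proof: both verify downward-closure directly, derive conflict-freeness of $C$ by propagating a hypothetical conflict up to two max-events and then out to two distinct conditions of $X$, and establish $X\subseteq Cut(C)$ by separately checking $X\subseteq Min(O)\cup C\bullet$ and $X\cap\bullet C=\emptyset$, the latter via the same ``$c<e_{m_j}<c'$ with $c,c'\in X$'' contradiction. The only differences are cosmetic: you prove $Max(C)=Max(\bullet X)$ before $X\subseteq Cut(C)$ and extract the pairwise non-conflict of the $e_{m_i}$ as a standalone auxiliary fact, and your argument for $Max(\bullet X)\subseteq Max(C)$ appeals directly to maximality in $\bullet X$ rather than routing through conditions of $X$ as the paper does---a slight streamlining, but not a different idea.
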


\begin{lemma}\label{lemm:third} $e\in Max(\mathcal{C})\Rightarrow\bullet e\subseteq Cut(\mathcal{C}\backslash\{e\})$.
\end{lemma}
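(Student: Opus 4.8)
The plan is to unfold the definition of $Cut$ and verify directly that every input condition of $e$ meets the two membership requirements built into that definition. First I would observe that since $e\in Max(C)$, Lemma 1(1) guarantees $C\backslash\{e\}$ is itself a configuration, so that $Cut(C\backslash\{e\})=(Min(O)\cup(C\backslash\{e\})\bullet)\backslash\bullet(C\backslash\{e\})$ is well defined. Fixing an arbitrary $c\in\bullet e$, the goal then reduces to establishing both $c\in Min(O)\cup(C\backslash\{e\})\bullet$ and $c\notin\bullet(C\backslash\{e\})$.

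For the first requirement I would invoke the occurrence-net property $|\bullet c|\leq 1$. Either $c$ has no producing event, in which case $c\in Min(O)$ and we are done, or $c$ has a unique producing event $e'$ with $c\in e'\bullet$. In the latter case there is a path $e'\,c\,e$, so $e'<e$ and hence $e'\neq e$; since $C$ is a configuration containing $e$, the causal-closure condition forces $e'\in C$, so $e'\in C\backslash\{e\}$ and $c\in(C\backslash\{e\})\bullet$. Either way the first membership holds.

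For the second requirement I would argue by contradiction: suppose some $f\in C\backslash\{e\}$ had $c\in\bullet f$. Then $c$ feeds both $e$ and $f$ with $e\neq f$, so the two paths $c\,e$ and $c\,f$ start at the same condition and diverge immediately, which by the definition of conflict means $e\#f$. But $e,f\in C$ and $C$ is a configuration, so $\neg(e\#f)$, a contradiction. Hence $c\notin\bullet(C\backslash\{e\})$. Combining the two parts gives $c\in Cut(C\backslash\{e\})$ for every $c\in\bullet e$, i.e. $\bullet e\subseteq Cut(C\backslash\{e\})$.

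I expect no serious obstacle in this argument; it is essentially a definitional check. The only genuinely load-bearing observations are, first, that one must apply Lemma 1(1) at the outset so that the cut on the right-hand side is meaningful at all, and second, that the second part hinges on the occurrence-net structure, namely that two distinct events sharing an input condition are necessarily in conflict, which in turn contradicts membership of both events in a single configuration.
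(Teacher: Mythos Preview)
Your proof is correct and follows essentially the same route as the paper: show $\bullet e\subseteq Min(O)\cup(C\backslash\{e\})\bullet$ via causal closure of $C$, then show $\bullet e\cap\bullet(C\backslash\{e\})=\emptyset$ by a conflict argument. You are a bit more explicit than the paper (in particular you invoke Lemma~1(1) to ensure $Cut(C\backslash\{e\})$ is well defined and you spell out the $|\bullet c|\le 1$ case split), but the underlying argument is the same.
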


In case (2), when $t_1\bullet\cap \bullet t_2\neq\emptyset$, we capture the intuition of `non-blocking' by introducing the following concept of Max-Event Adjacent (MEA).

\begin{definition}[Max-Event Adjacent]\label{def:madj}
For two events $e_1$ and $e_2$ in a CFP such that $e_1<e_2$, $e_1$ is said to be Max-Event Adjacent (MEA) with $e_2$, denoted as $e_1\triangleright e_2$, iff $e_1\in Max(\bullet(\bullet e_2))$. $e_1\triangleright e_2$ can be equivalently defined as $\forall c\in e_1\bullet:c<e_2\Rightarrow c\in\bullet e_2$ (more convenient and efficient for implementation).
\end{definition}

Intuitively, $e_1\triangleright e_2$ means $e_1$ is directly connected to $e_2$, but with an additional restriction that no third event `blocks' any path between them. For example, in Fig. \ref{fig:maxadj:b}, which shows the CFP of the net in Fig. \ref{fig:maxadj:a}, we have $\bullet(\bullet T_3\mbox{-}5)=\{T_0\mbox{-}1,T_2\mbox{-}3\}$, but only $T_2\mbox{-}3$ is \emph{MEA} with $T_3\mbox{-}5$. $T_0\mbox{-}1$ is \emph{not} \emph{MEA} with $T_3\mbox{-}5$ because $T_0\mbox{-}1<T_2\mbox{-}3$ and consequently $T_0\mbox{-}1\not\in Max(\bullet(\bullet T_3\mbox{-}5$)) (intuitively, $T_2\mbox{-}3$ `blocks' one of the paths between $T_0\mbox{-}1$ and $T_3\mbox{-}5$). Likewise, we have $T_2\mbox{-}4\triangleright T_4\mbox{-}6$ and $\neg(T_1\mbox{-}2\triangleright T_4\mbox{-}6)$.

MEA ($\triangleright$) is a sufficient condition to detect TARs in case (2), as we can prove using Lemma \ref{lemm:first} and Lemma \ref{lemm:third} that for any bounded Petri net's CFP:

\begin{theorem}\label{prop:direct}
        $e_1\triangleright e_2\Rightarrow h(e_1)<_{tar}h(e_2)$.
\end{theorem}

\begin{proof}
Let $\mathcal{C}=[e_2]$. By definition of local configuration, $Max([e_2])=\{e_2\}$. Therefore, by Lemma \ref{lemm:first}, $\mathcal{C'}=\mathcal{C}\backslash \{e_2\}$ is a configuration, and by Lemma \ref{lemm:third} we have: (a) $\bullet e_2\subseteq Cut(\mathcal{C'}).$

Moreover, it is easy to prove that $Max([e_2]\backslash \{e_2\})\subseteq \bullet(\bullet e_2)$. From $e_1\triangleright e_2$, we know $e_1\in Max(\bullet(\bullet e_2))$, and therefore $e_1\in Max([e_2]\backslash \{e_2\})=Max(\mathcal{C'})$. Consequently $\mathcal{C''}=\mathcal{C'}\backslash\{e_1\}$ is also a configuration. And by Lemma \ref{lemm:third} it holds that: (b) $\bullet e_1\subseteq Cut(\mathcal{C''}).$

Let the marking of $\pi$ corresponding to $Cut(\mathcal{C''}),Cut(\mathcal{C'})$ and $Cut(\mathcal{C})$ be denoted as $M_1,M_1',M_2$ respectively. By (a) and (b), it holds that $M_1\xrightarrow{e_1}M_1'\xrightarrow{e_2}M_2$ and therefore $h(e_1)<_{tar}h(e_2)$.
\end{proof}

\begin{figure}
  \centering
  \subfloat[][$T_0,T_1$ in TAR with $T_2$ but not $T_3,T_4$]{
    \label{fig:maxadj:a} 
    \includegraphics[width=2.7in]{cpulizi.pdf}}
  \hspace{0.1in}
  \subfloat[][An intuitive illustration of Max-Event Adjacent structure]{
    \label{fig:maxadj:b} 
    \includegraphics[width=3.0in]{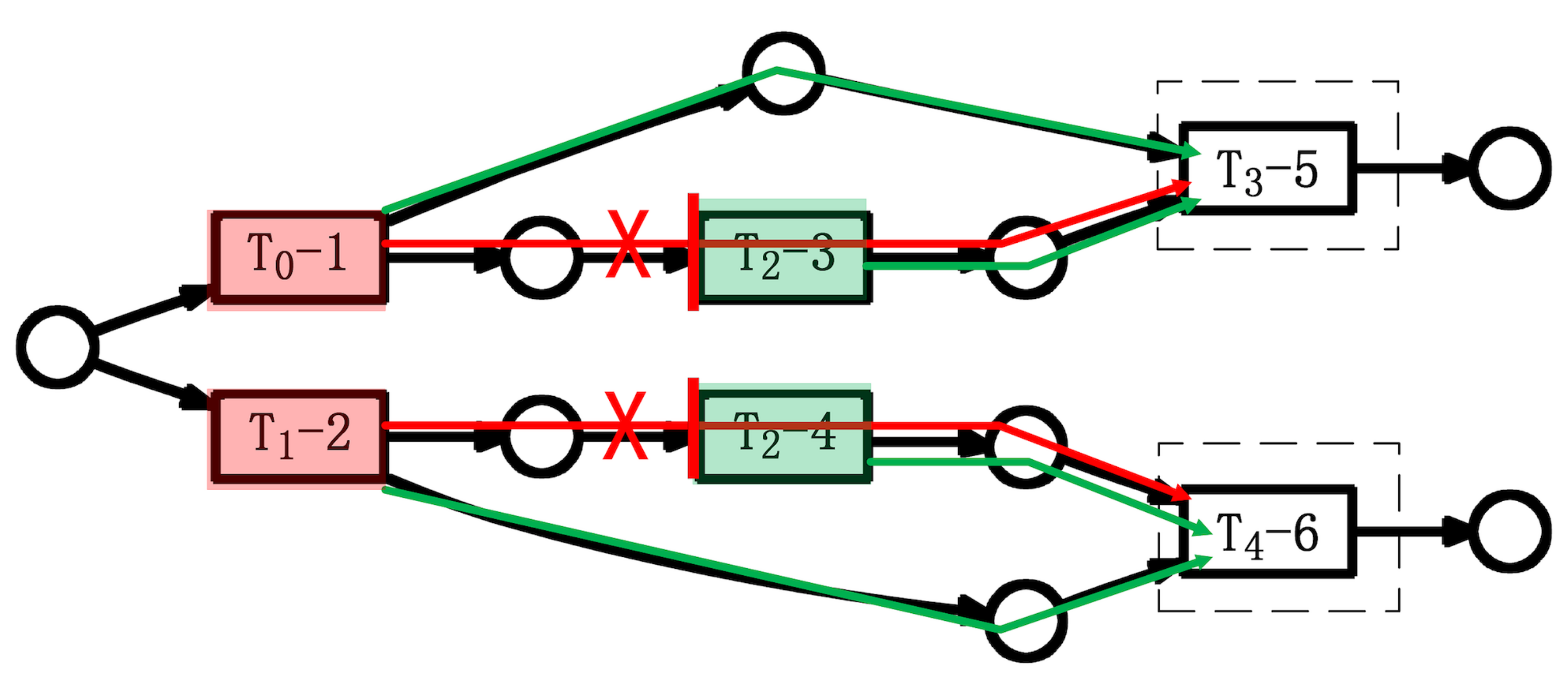}}
  \hspace{0.3in}
  \subfloat[][$T_2$ and $T_4$ are both in TAR with $T_5$]{
    \label{fig:subfigjump:a} 
    \includegraphics[width=3.4in]{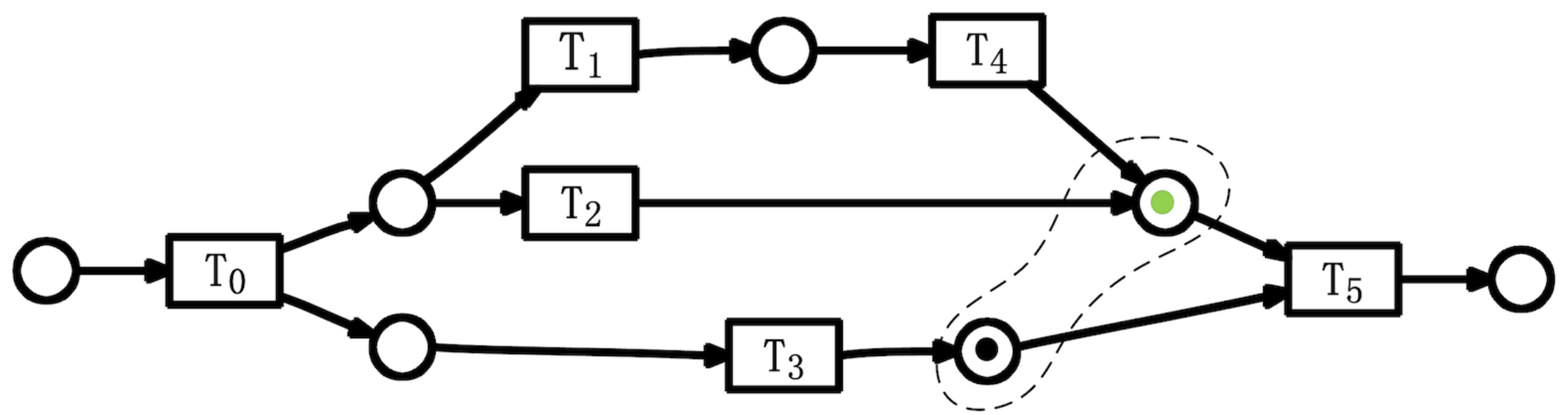}}
  \hspace{0.3in}
  \subfloat[][$T_4\mbox{-}5$ may `re-use' the MEA structures between $T_2\mbox{-}4$ and $T_5\mbox{-}6$]{
    \label{fig:subfigjump:b} 
    \includegraphics[width=3.2in]{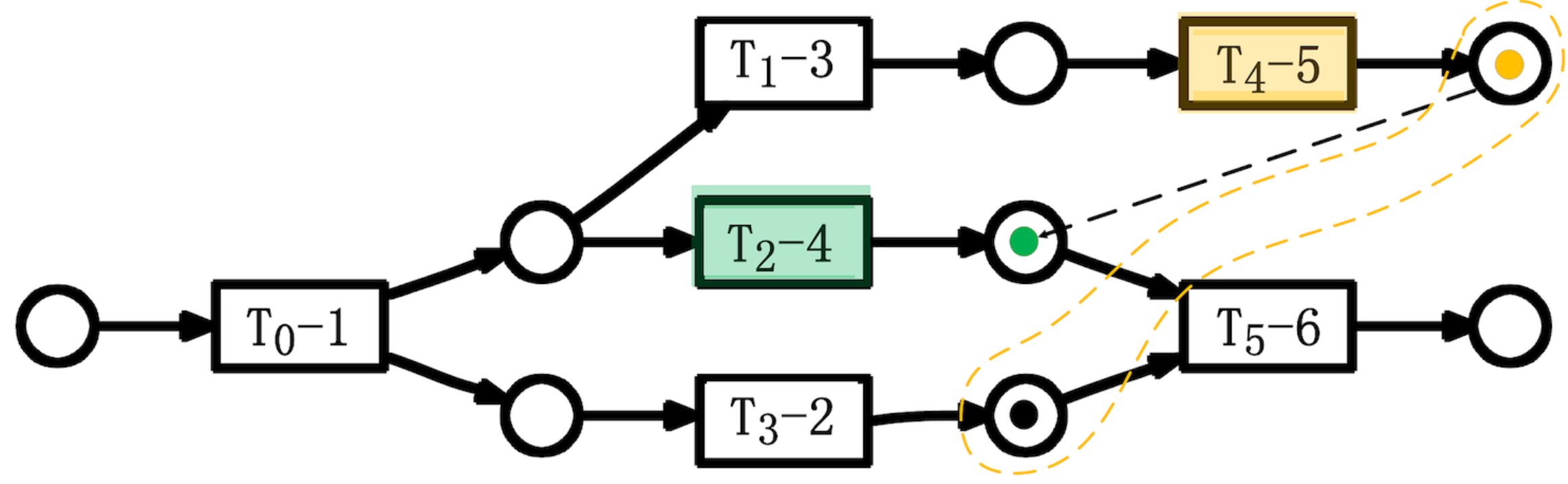}}\\
    \label{fig:subfigjump} 
  \caption{Applying Max-Event Adjacent Structure for TAR Derivation}
\end{figure}

However, MEA based TAR checking requires that the whole MEA structure is preserved in the CFP. In Fig. \ref{fig:subfigjump:b}, as $T_4\mbox{-}5$ is a cut-off event, Thm. \ref{prop:direct} cannot be directly applied. To deal with such cases, we provide the following result:

\begin{theorem}\label{prop:cut}
Let $\pi$ be the CFP of a bounded net system. Let $e_1$ be a cut-off event and $e_1'$ its corresponding event of $\pi$ and $h(e_1\bullet)=h(e_1'\bullet)$. Then $e_1'\triangleright e_2\Rightarrow h(e_1)<_{tar}h(e_2)$. 
\end{theorem}

\begin{proof}
By $e_1'\triangleright e_2$, we know that $e_1'\in Max(\bullet(\bullet e_2))$, and therefore $e_1'<e_2$ and $[e_1']\subseteq[e_2]\backslash\{e_2\}$. By Lemma~\ref{lemm:third}, we can construct a sequence of events $g_1,g_2,...,g_k$ such that $[e_1']=[e_2]\backslash\{e_2\}\backslash\{g_k\}\backslash\{g_{k-1}\}\backslash...\backslash\{g_1\}$. Let $h(g_1),...,h(g_k)$ be denoted as $t_1,...,t_k$. By recursively applying Lemma~\ref{lemm:third} we know that $Mark([e_1'])\xrightarrow{t_1}M_1\xrightarrow{t_2}...\xrightarrow{t_k}Mark([e_2]\backslash\{e_2\})$. From $Mark([e_1])=Mark([e_1'])$ we have $Mark([e_1])\xrightarrow{t_1}M_1\xrightarrow{t_2}...\xrightarrow{t_k}Mark([e_2]\backslash\{e_2\})$ Since $e_1'\triangleright e_2$, only tokens in $Cut([e_1'])\backslash e_1'\bullet$ are required to enable the previous firing sequence. Because of this and $h(e_1'\bullet)=h(e_1\bullet)$, it can be inferred that $Mark([e_1]\backslash\{e_1\})\xrightarrow{t_1}M'_1\xrightarrow{t_2}...\xrightarrow{t_k}M'_k\xrightarrow{h(e_1)}Mark([e_2]\backslash\{e_2\})$. And therefore $h(e_1)<_{tar}h(e_2)$.
\end{proof}

Given the definition of MEA structures, in cases when $t_1\bullet\cap \bullet t_2\neq\emptyset$, we may quickly confirm $t_1<_{tar}t_2$ (if it holds) if we find some event pairs related to $t_1,t_2$ satisfy the conditions of Thm. \ref{prop:direct} and \ref{prop:cut}, which takes much less time to confirm than the coverability check in Algorithm~\ref{alg:basic}.


Using the above results, we can confirm the majority of TARs quickly, except for cases when TAR of a pair of consecutive transitions ($t_1\bullet\cap \bullet t_2\neq\emptyset$) cannot be confirmed using Thm. \ref{prop:direct} and \ref{prop:cut}. In Fig. \ref{fig:weird}, we show such a special example, where additional checking using Algorithm~\ref{alg:basic} is necessary to confirm TAR.

\begin{figure*}
  \makebox[1.1\textwidth][l]{
  \subfloat[][$T_7<_{tar}T_9$]{
    \label{fig:weird:a} 
    \includegraphics[width=0.40\textwidth]{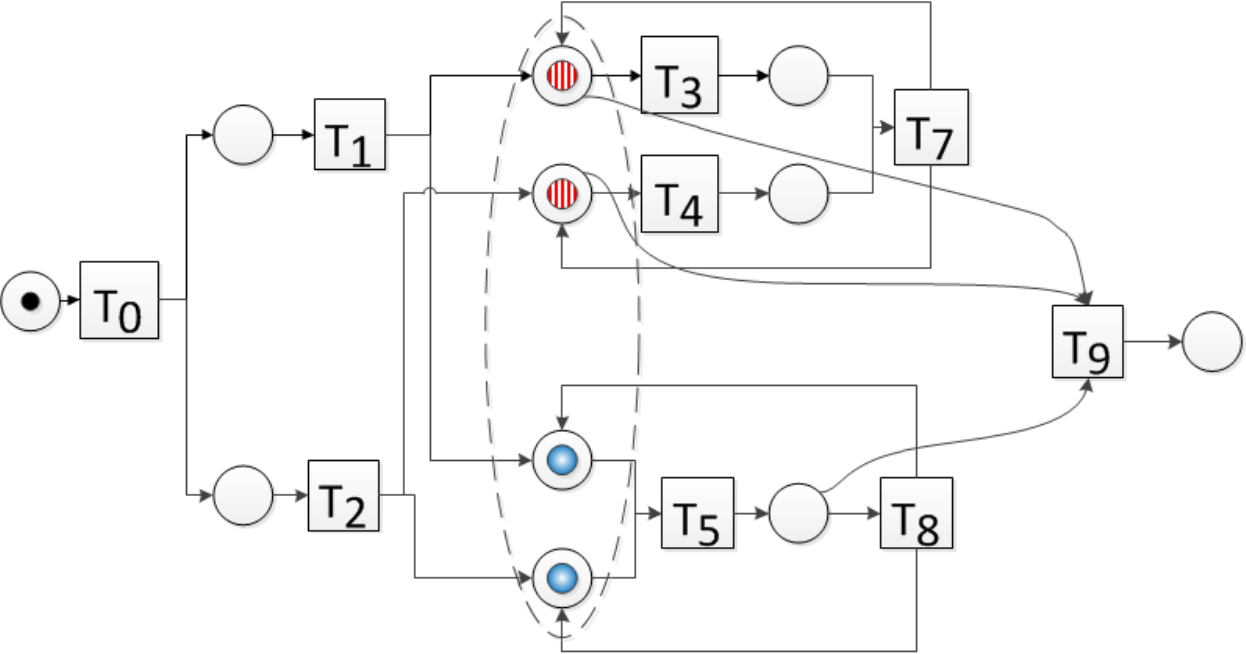}}
  \subfloat[][Thm. \ref{prop:cut} cannot be applied between $T_7\mbox{-}7$ and $T_8\mbox{-}8$]{
    \label{fig:weird:b} 
    \includegraphics[width=0.6\textwidth]{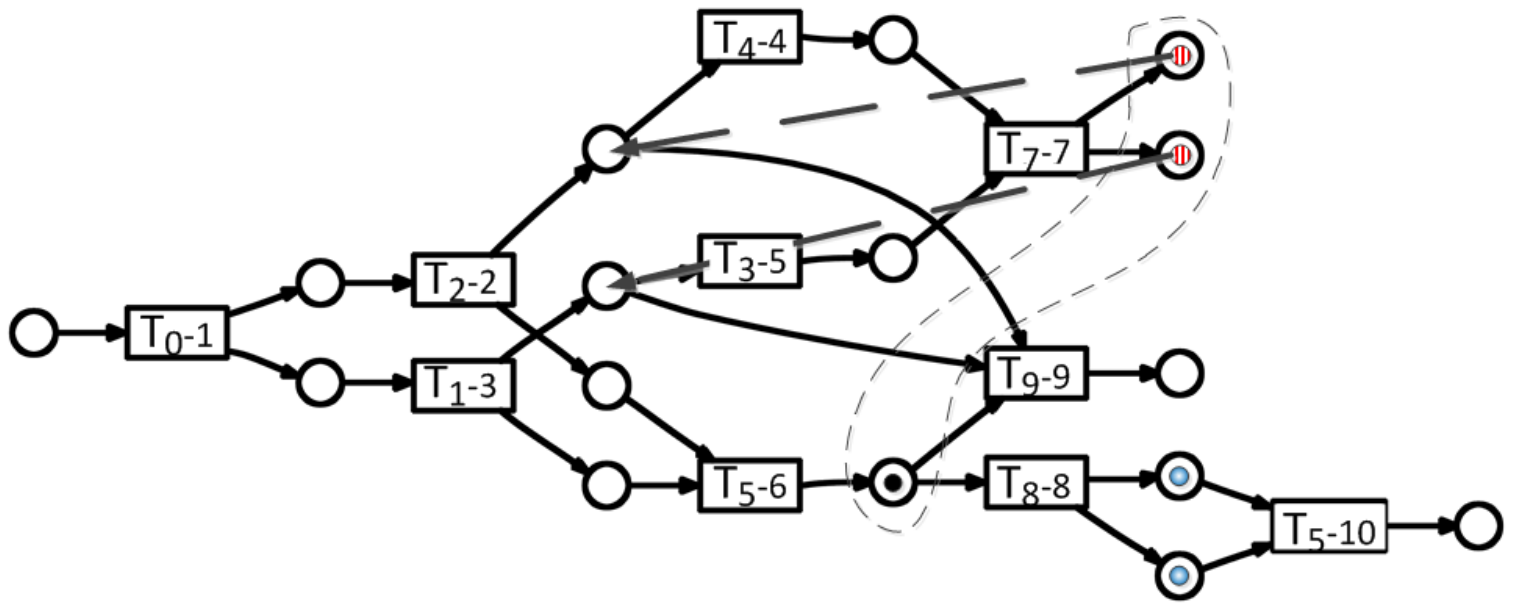}}
  \caption{A challenging cut-off event example that MEA rule does not cover}
  \label{fig:weird} 
  }
\end{figure*}

In Fig. \ref{fig:weird}, the TAR between the consecutive transitions $T_7$ and $T_9$ cannot be confirmed directly using MEA, because $T_7\mbox{-}7$ is a \emph{cut-off} event. Neither can we discover this TAR by reusing the MEA related to $T_7\mbox{-}7$'s corresponding event $T_8\mbox{-}8$ based on Thm. \ref{prop:cut}, because Thm. \ref{prop:cut} requires the corresponding places of the post conditions of the two events are identical, but in this case $h(T_7\mbox{-}7\bullet)\neq h(T_8\mbox{-}8\bullet)$.

Therefore for such pairs of transitions, we are still faced by the challenge that MEA-based derivation rules cannot cover all TAR cases. For these cases, we need to utilize Algorithm~\ref{alg:basic} for general TAR checking.

What we have learned about MEA in the original TAR computation can also be easily extended to pTAR computation. By extending MEA to `neglect' the silent events (events corresponding to silent transitions), we may identify pTAR from CFP using causal relations only.

\begin{definition}[Silent Adjacency]\label{def:sadj}
For two nodes $n_1,n_2\in C\cup E$ in an unfolding $\pi=(O,h), O=(C,E,G)$ such that $n_1<n_2$, we say that $n_1$ and $n_2$ are in silent adjacency, denoted as $n_1\triangleright_s n_2$ if and only if $\forall e\in E:n_1<e<n_2\Rightarrow h(e)$ is a silent transition.
\end{definition}

\begin{remark}\label{rmk:1}
It follows that $n_1\triangleright_s n_2$ iff $n_1<n_2$ and $\forall n_p\in\bullet n_2:n_1<n_p\Rightarrow n_1\triangleright_s n_p$.
\end{remark}

\begin{theorem}\label{thm:silentadj}
For a pair of transitions $t_1,t_2$ in a net system $S$, $t_1<_{ptar}t_2$ if and only if there exist in unfolding a pair of events $e_1,e_2:h(e_1)=t_1, h(e_2)=t_2$ and $e_1\triangleright_{s} e_2$.
\end{theorem}

\begin{proof}
($\Rightarrow$)For any given pair of transitions $t_1,t_2$ in pTAR relation, according to Def. \ref{def:STAR} there exists a marking $M_s$ from which we can consecutively fire a sequence of transitions $\sigma=\langle t_1,t_{i_1},t_{i_2},...,t_{i_n},t_2\rangle$, where $t_{i_1},t_{i_2},...,t_{i_n}$ are silent tasks, i.e. $M_s\xrightarrow{t_1}M_1\xrightarrow{t_{i_1}}M_2\xrightarrow{t_{i_2}}$...$\xrightarrow{t_{i_n}}M_{n+1}\xrightarrow{t_2}M_{n+2}$. In the full unfolding $Unf$ of the net system $S$, there is a configuration $\mathcal{C}$ in $\beta$ for which it holds that $Cut(\mathcal{C})=M_s$. Consequently, a sequence of events $E=\langle e_1, e_{i_1},e_{i_2},...,e_{i_n},e_2\rangle$: $h(e_{i_k})=t_{i_k}(1\leq k\leq n), h(e_1)=t_1, h(e_2)=t_2$ can be added to $\mathcal{C}$. For each event in $\{e_{i_k}\}$ that occurs after $e_1$, because $\mathcal{C}\cup E$ is a configuration, we have $\neg(e_1 \# e_{i_k})$, and therefore it holds either $e_1$ $co$ $e_{i_k}$ or $e_1<e_{i_k}$ ($e_{i_k}<e_1$ does not hold because $e_1$ can be added before $e_{i_k}$). Similarly, we know that $e_1<e_2$, because $e_1$ co $e_2$ must not hold (otherwise, we would have $t_1<_{tar}t_2$, which contradicts with $t_1<_{ptar}t_2$). Consider any event $e:e_1<e<e_2$. Since $e_1<e$, we know that $e\not\in\mathcal{C}$, and consequently $e\in E$. Therefore, $e$ must belong to $\{e_{i_k}\}$, and $h(e)$ is a silent transition. By Def. \ref{def:sadj}, we have $e_1\triangleright_{s} e_2$.

($\Leftarrow$) Let $E_1=\{e_{i_k}\in[e_2]|e_1\leq e_{i_k}< e_2\}$=$\{e_{i_1},e_{i_2},...,e_{i_n},n=|E_1|\}$, where $e_{i_p}<e_{i_q}\Rightarrow p<q$. As $e_1\triangleright_{s}e_2$ it holds that $e_{i_1}=e_1$ and $\{h(e_{i_2}), h(e_{i_3}),...,h(e_{i_n})\}$ are all silent transitions. By Lemma \ref{lemm:first} and Lemma~\ref{lemm:third}, $[e_2]\backslash\{e_2\}$ is a configuration, and $h(e_2)$ is enabled at $M_{n+1}=Mark([e_2]\backslash\{e_2\})$.

Consider $e_{i_n}$. It holds that $\forall e\in [e_2]\backslash\{e_2\}, e_{i_n}\not<e$, otherwise it contradicts with $e_{i_p}<e_{i_q}\Rightarrow p<q$. Therefore $e_{i_n}\in Max([e_2]\backslash\{e_2\})$. Again, by Lemma~\ref{lemm:first} and Lemma~\ref{lemm:third}, $[e_2]\backslash\{e_2, e_{i_n}\}$ is a configuration, and $h(e_{i_n})$ is enabled at $M_n=Mark([e_2]\backslash\{e_2, e_{i_n}\}$ and $M_n\xrightarrow{h(e_{i_n})}M_{n+1}$. By recursively applying the above process, we have $e_{i_k}\in Max([e_2]\backslash\{e_2,e_{i_n},...,e_{i_{k+1}}\}) (k\geq 1)$. In other words, from $M_1=Mark[e_1]$, there exists a firing sequence $M_1\xrightarrow{h(e_{i_1})}M_2\xrightarrow{h(e_{i_2})}...\xrightarrow{h(e_{i_n})}M_{n+1}$, and $h(e_2)$ is enabled at $M_{n+1}$. As $e_{i_1}=e_1$ and $\{h(e_{i_2}), h(e_{i_3}),...,h(e_{i_n})\}$ are all silent transitions, we have $h(e_1)=t_1<_{ptar}t_2=h(e_2)$.
\end{proof}

In order to recover the truncated silent adjacency structures, we propose to extend the original CFP after cut-off events, until \emph{at least one} silent adjacency structure can be identified for each pair of transitions in pTAR.


\begin{figure}
\includegraphics[width=3.0in]{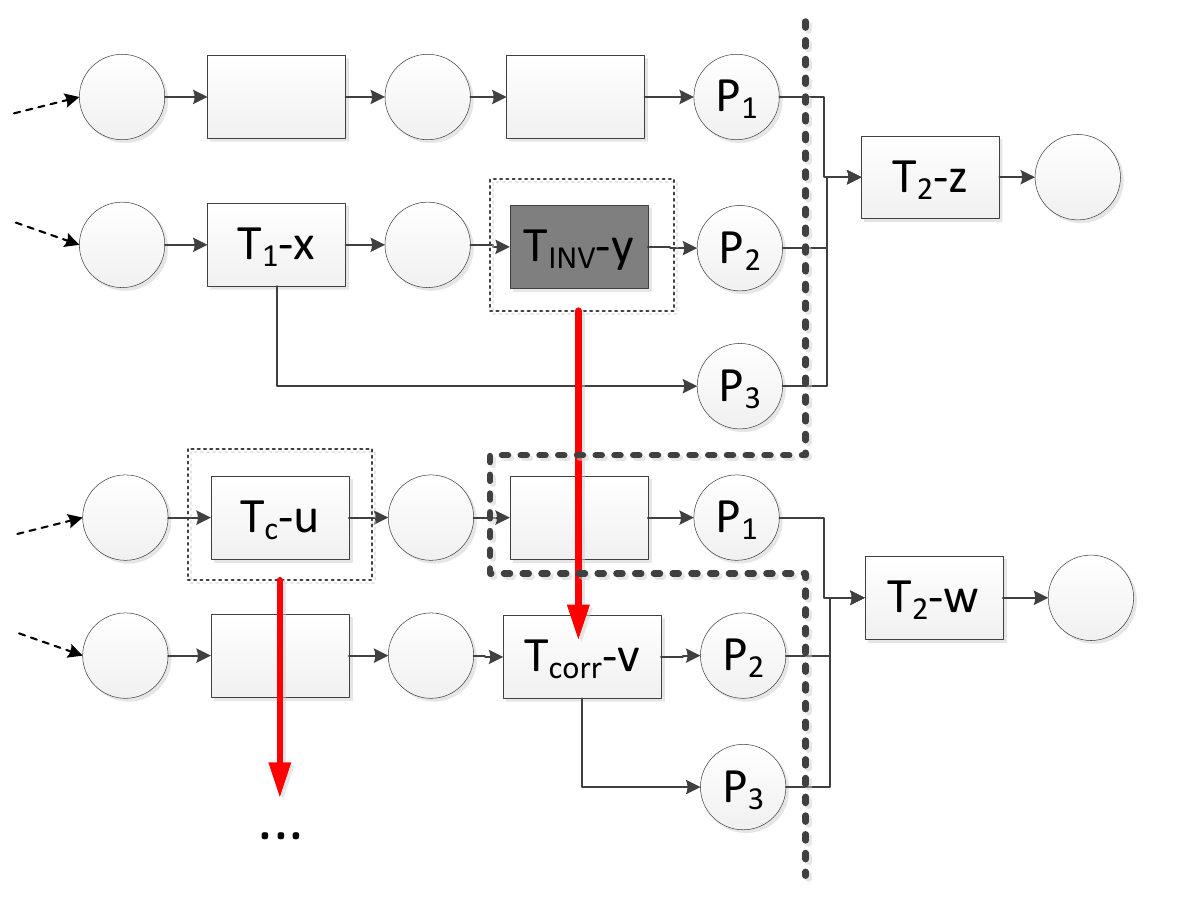}
\caption{The handling of cut-off event `chains' is a major challenge}\label{fig:hard}
\end{figure}

However, because the events in silent adjacency structures may be truncated by cut-off events, the silent adjacency structures may not be completely available in CFP. Fig.~\ref{fig:hard} shows a possible CFP, where a cut-off event $T_{INV}\mbox{-}y$ is formed by a silent transition in a loop structure. It causes the $\triangleright_s$ structure between $T_1\mbox{-}x$ and $T_2\mbox{-}z$ to be truncated.

The idea of looking behind the corresponding events does not work here, because it is not guaranteed that there is an isomorphic silent adjacency structure after $T_{corr}\mbox{-}v$. For example, there might be other cut-off events (in this example, it is $T_c\mbox{-}u$) that are in $co$ with $T_{corr}\mbox{-}v$ and are precedent to the silent adjacency structure after $T_{corr}\mbox{-}v$. In this case, $T_c\mbox{-}u$ truncates $T_2\mbox{-}w$ after $T_{corr}\mbox{-}v$, so now we are forced to seek an isomorphic silent adjacency structure after $T_c\mbox{-}u$'s corresponding event that may confirm $T_1<_{ptar}T_2$. If such patterns keep repeating, it will make efficient computation impossible. One direct solution is to use coverability checking as in Algorithm~\ref{alg:basic} to deal with cases involving cut-off events, as Algorithm~\ref{alg:final} does.
\begin{algorithm}
  \caption{Accelerated TAR Derivation (IMPROVED)}\label{alg:final}
  \begin{algorithmic}[1]
    \Function {IMPROVED}{$S,\pi$}
        \State $TAR\leftarrow\emptyset$
%
%
        \State\textbf{for}{ \textbf{each} $e_1\in \pi$} \textbf{do}
            \State\indent\textbf{for}{ \textbf{each} $e_2:(e_1\triangleright e_2)\vee (e_1$ \emph{co} $e_2)\vee (e_1\triangleright_s e_2)$} \textbf{do}
            \State\indent\indent$TAR\leftarrow TAR\cup \{\langle h(e_1),h(e_2)\rangle\}$
        \State\textbf{for}{ \textbf{each} cut-off event $e$ and its corr. event $e'$} \textbf{do}
            \State \indent\textbf{if} $h(e\bullet)=h(e'\bullet$) \textbf{then}
            \State\indent\indent\textbf{for} \textbf{each} $e_2:e'\triangleright e_2$ \textbf{do} $TAR\leftarrow TAR\cup \{\langle h(e),h(e_2)\rangle\}$
        \State\textbf{for}{ \textbf{each} $t_1,t_2\in S$} \textbf{do}
           \State /*we can further restrict to $t_1,t_2:t_1\bullet\cap \bullet t_2\neq\emptyset$ if we do not compute pTAR*/
                \State\indent\textbf{if} $\langle h(e_1),h(e_2)\rangle\not\in TAR$ \textbf{then}
                \State\indent\indent Reduce $t_1<_{tar}t_2$ or $t_1<_{ptar}t_2$ to a marking $M$'s coverability
                \State\indent\indent\textbf{if} $M$ is coverable \textbf{then} $TAR\leftarrow TAR\cup \{\langle t_1,t_2\rangle\}$
       \State \textbf{return} $TAR$
    \EndFunction
  \end{algorithmic}
\end{algorithm}

From Line 3$\sim$8), Algorithm~\ref{alg:final} attempts to early-confirm TAR/pTAR using the accelerative rules. After that, coverability based approach is applied only to those transition pairs that have not been early-confirmed.

\section{Computing TAR/pTAR with Continued Unfolding}\label{sec:ConUnf}

Apart from coverability checking, another strategy could be extending the CFP structure after cut-off events. Through the following analysis, we show how to extend just enough transitions for pTAR computation and avoid unnecessary extensions.

\subsection{Continued Unfolding After Cut-off Events}

\begin{definition}[Local Silent Extension]\label{def:lse} Let $\beta$ be the full unfolding of a net system $S$. The \textit{local silent extensions} of $\beta$ with respect to an event $e_1$, denoted as $LSE(\beta, e_1)$, is defined as a set of events $\{e=(t,B)\}$, in which for each event $e=(t,B)$, $B$ is a co-set of conditions in $\beta$ and $t$ is a transitions of $S$, $h(B)=\bullet t$ such that
\begin{itemize}
\item[(1)] $\forall c\in B:(e_1\triangleright_s c)\vee(e_1$ co $c$), and $\exists c\in B: e_1\triangleright_s c$
\item[(2)] $\beta$ contains no event $e$ satisfying $h(e)=t$ and $\bullet e=B$
\end{itemize}
\end{definition}

Intuitively, $LSE(\beta, e_1)$ extends the original branching process $\beta$ by only adding any event that is in silent adjacency relation with $e_1$. By recursively doing this, we get an extended CFP in which we shall find all the silent adjacency structures we need, that is:

\begin{definition}[Silently Continued Branching Process]\label{def:leu}
Let $\pi$ be the CFP of a net system $S$ and $e_1$ an event in $\pi$. The \emph{silently continued branching processes} after $e_1$, denoted as $\beta\in \mathcal{L}(e_1)$ are a set of branching processes of $S$ such that:
\begin{itemize}
\item[-] $\pi\in\mathcal{L}(e)$
\item[-] $\forall \beta':\beta'\in \mathcal{L}(e_1)$ and $e=(t,B)\in LSE(\beta',e_1)$, $\beta=\beta'\cup\{e\}\cup e\bullet\in\mathcal{L}(e)$ .
\end{itemize}
\end{definition}

Intuitively, silently continued branching processes are derived by recursively extending the original CFP behind $e_1$ using $LSE$. We call the maximal silently continued branching process w.r.t. a cut-off event $e_1$ in CFP its \emph{silently continued unfolding}, denoted as $\beta_{e_1}$, which can be constructed with Algorithm~\ref{alg:unfsilent}.

\begin{algorithm}
  \caption{Deriving silently continued unfolding after $e_1$}\label{alg:unfsilent}
  \begin{algorithmic}[1]
    \Function {silently-continued-unfolding}{$S,\pi,e_1$}
        \State $\beta_{e_1}\leftarrow\pi$
        \State $lse\leftarrow LSE(\beta_{e_1}, e_1)$
        \State \textbf{while} $lse\neq\emptyset$ \textbf{do}
        \State\indent take an event $e=(t,B)$ out of $lse$ and add it to $\beta_{e_1}$ with a condition $(c, e)$ for every output place $p$ of $t$.
        \State\indent $lse\leftarrow LSE(\beta_{e_1},e_1)$
        \State \textbf{endwhile}
    \EndFunction
  \end{algorithmic}
\end{algorithm}

Similar to the original unfolding algorithm \cite{CPU02}, Algorithm \ref{alg:unfsilent} does not necessarily terminate. This suggest we need to introduce a new kind of termination condition for silently continued branching processes (i.e. a new kind of cut-off event ). But let us come back to this later. Now, with the denotation of $\beta_{e_1}$ and Algorithm~\ref{alg:unfsilent}, we can prove that for each pair of transitions in pTAR, at least one silent adjacency structure could be found in $\beta_{e_1}$.

\begin{theorem}\label{thm:silentbeta}
Let $\beta$ and $\pi$ be the full unfolding and CFP of a net system $S$ respectively. For every pair $e_1,e_2:e_1\triangleright e_2$ in $\beta$, there exists an event $e_1':h(e_1')=h(e_1)$ in $\pi$, such that in $\beta_{e_1}$, there is an event $e_2':h(e_2')=h(e_2)$ and $e_1'\triangleright e_2'$.
\end{theorem}

\begin{proof}
Let $E=\{e_{i_k}:e_1<e_{i_k}<e_2\}$, $1\leq k\leq n$ and restrict, without loss of generality, that $e_{i_p}<e_{i_q}\Rightarrow p<q$. Since $[e_2]$ is a configuration, by recursively applying Lemma~\ref{lemm:first}, we know that $$\mathcal{C}=[e_2]\backslash\{e_2\}\backslash\{e_{i_k}\}\backslash\{e_{i_{k-1}}\}\backslash...\backslash\{e_{i_1}\}\backslash\{e_1\}$$ is a configuration. Consequently, it is easy to prove that $\mathcal{C}\oplus\{e_1\}$,$\mathcal{C}\oplus\{e_1,e_{i_1}\}$,...,$\mathcal{C}\oplus\{e_1,e_{i_1},...,e_{i_n}\}$ and $\mathcal{C}\oplus\{e_1,e_{i_1},...,e_{i_n},e_2\}=[e_2]$ are all configurations.\\

By Property \ref{prop:first}, we know that in $\pi$ there exists a configuration $\mathcal{C'}$ such that $Mark(\mathcal{C'})=Mark(\mathcal{C})$ and $\mathcal{C'}$ does not contain any cut-off events.

As $Mark(\mathcal{C'})=Mark(\mathcal{C})$, $\Uparrow\mathcal{C'}$ is isomorphic with $\Uparrow\mathcal{C}$. Let $I_1^2$ be the isomorphism between $\mathcal{C}$ and $\mathcal{C'}$. Let $e'_1=I_1^2(e_1)$. Since $\mathcal{C}\oplus \{e_1\}$ is a configuration, $\mathcal{C'}\oplus \{e'_1\}$ is also a configuration. And as $\mathcal{C'}$ does not contain any cut-off event, $\mathcal{C'}\cup\{e_1'\}$ is contained in $\pi$.\\

Let $e'_{i_k}=I_1^2(i_k)$ $(1\leq k\leq n)$, and $e'_2=I_1^2(e_2)$.
Since $\mathcal{C}\oplus\{e_1,e_{i_1},...,e_{i_k},e_2\}$ is a configuration, $\mathcal{C'}\oplus\{e'_1,e'_{i_1},...,e'_{i_k},e'_2\}$ should also be a configuration in $\beta$ (but not necessarily fully contained in $\pi$).
For Algorithm~\ref{alg:unfsilent} 
, we now prove that $\mathcal{C'}\oplus\{e'_1,e'_{i_1},...,e'_{i_n},e'_2\}$ are in $\beta_{e'_1}$. First, because $\mathcal{C'}\cup\{e'_1\}\subseteq\pi\subseteq\beta_{e'_1}$, $e'_1$ is contained in $\beta_{e'_1}$. For events in $\{e'_{i_k}\}$, we prove by induction:

\textbf{(a)} When $k=1$, because $(\mathcal{C}\cup\{e_1\})\oplus\{e_{i_1}\}$ is a configuration, $(\mathcal{C}\cup\{e'_1\})\oplus\{e'_{i_1}\}$ is also a configuration. By isomorphism, as $e_1<e_{i_1}$, it holds that $e'_1<e'_{i_1}$. Therefore $B=\bullet e'_{i_1}\subseteq Cut(\mathcal{C'}\oplus\{e'_1\})\subseteq\beta_{e'_1}$ and $\exists c\in B: e_1<c$ (condition (1) of Def. \ref{def:sadj}). For any $c$ in $B$, we have $\{ e|e'_1<e<c\}=\emptyset$, so condition (2) of Def. \ref{def:sadj} holds trivially. Therefore, at Line 3 of Algorithm~\ref{alg:unfsilent}, if $\beta_{e'_1}$ does not already contain $e'_{i_1}=(h(e'_{i_1}),B)$, then condition (3) is satisfied as well, and $e'_{i_1}$ will be added to $lse$ and eventually inserted into $\beta_{e'_1}$ at Line 5.

\textbf{(b)} For any $k>1$, assume that $\mathcal{C'}\oplus\{e'_1,e'_{i_1},...,e'_{i_k-1}\}$ can eventually be contained in $\beta_{e'_1}$. If $e'_{i_k}$ is in $\pi$, then the induction hypothesis holds trivially. Next we consider the case of $e'_{i_k}$ not being $\pi$.

Let $B=\bullet e_{i_k}$. By isomorphism, $\mathcal{C'}\oplus\{e'_1,e'_{i_1},...,e'_{i_k-1}\}\oplus\{e'_{i_k}\}$ is also a configuration, so we have $B\subseteq Cut(\mathcal{C'}\oplus\{e'_1,e'_{i_1},...,e'_{i_k-1}\})$. As $e_1<e_{i_k}$, it holds that $e'_1<e'_{i_k}$ and consequently $\exists c\in B: e'_1<c$. Therefore condition (1) of Def. \ref{def:sadj} holds.

Consider $E_M=\{e|e'_1<e<c,c\in B\}$. If $E_M=\emptyset$, then in the same manner of (a) we can conclude that both condition (1), (2) and (3) of Def. \ref{def:sadj} will be satisfied at Line 3 of Algorithm~\ref{alg:unfsilent}, and $e'_{i_k}$ will be added in $\beta_{e'_1}$ if it is not already in $\pi$.Otherwise, if $E_M=\{e|e'_1<e<c,c\in B\}$ is not empty, then it must hold that $E_M\subseteq\{e'_{i_1},...,e'_{i_k-1}\}$. (This is because $c\in Cut(\mathcal{C'}\oplus\{e'_1,e'_{i_1},...,e'_{i_k-1}\})$, and in $\mathcal{C'}\oplus\{e'_1,e'_{i_1},...,e'_{i_k-1}\}$, $\{e'_{i_1},...,e'_{i_k-1}\}$ are the only events added after $\mathcal{C}\oplus\{e'_1\}$.) As $\{h(e'_{i_1}),...,h(e'_{i_k-1})\}$ are all silent transitions, condition (2) for $LSE$ also holds.

If $\mathcal{C'}\oplus\{e'_1,e'_{i_1},...,e'_{i_k-1}\}$ are already contained in $\pi$ in the beginning, then $e'_{i_k}$ can be added to $lse$ at Line 3 and eventually into $\beta_{e'_1}$ at Line 5. Otherwise, by our induction hypothesis, the events in $\{e'_{i_1},...,e'_{i_k-1}\}$ which are not in $\pi$ will be added into $\beta_{e'_1}$ by Line 5. Immediately after the last one of such events are added into $\beta_{e'_1}$ at Line 5, condition (3) for $LSE$ is satisfied, and $e'_{i_k}$ will be added to $lse$ and eventually into $\beta_{e'_1}$ at Line 5.

By \textbf{(a)} and \textbf{(b)}, by induction we know that $\mathcal{C'}\oplus\{e'_1,e'_{i_1},...,e'_{i_k}\}$ is either contained in $\pi$, or can be added to $\beta_{e'_1}$ in the loop of Line 4-8. \\

As we have previously shown, $\mathcal{C'}\oplus\{e'_1,e'_{i_1},...,e'_{i_k}\}\oplus\{e'_2\}$ is also a configuration. By the similar reasoning in \textbf{(b)}, we know that $\mathcal{C'}\oplus\{e'_1,e'_{i_1},...,e'_{i_k}\}\oplus\{e'_2\}$ will eventually be contained in $\beta_{e_1'}$, and that $e'_1<e'_2,\{e|e'_1<e<e'_2\}\subseteq\{e'_{i_1},...,e'_{i_k}\}$. As $\{h(e'_{i_1}),...,h(e'_{i_k})\}$ are all silent transitions, we have $e'_1\triangleright_s e'_2$.
\end{proof}

\subsection{Determine When to Stop a Silently Continued Unfolding}\label{subsec:stop}

Being similar with Esparza's original unfolding algorithm \cite{CPU02}, Algorithm~\ref{alg:unfsilent} 
does not necessarily terminate. To avoid state-explosion and infinite unfolding, we should stop silently continued unfolding when enough information has been unfolded for complete pTAR computation. In order to do this, we introduce a new type of cut-off events in silently continued unfolding as follows:

\begin{definition}\label{def:cut_lse} (Local Cut-off Events in Silently Continued Unfolding) Given two events $e_a,e_b$ in $\beta_{e_1}\backslash\pi$, let $E_a=\{e\in[e_a]:e\not\in \pi\}$, $E_b=\{e\in[e_b]:e\not\in \pi\}$. We say that $e_b$ is in the set of local cut-off events in silently continued unfolding $\beta_{e_1}\backslash\pi$, denoted as cut-off$_{e_1}$, if
\begin{itemize}
\item[(1)] $Min(\bullet E_a)\subseteq Min(\bullet E_b)$,
\item[(2)] $h((E_a\bullet\backslash \bullet E_a)\cup(Min(\bullet E_b)\backslash Min(\bullet E_a)))=h(E_b\bullet\backslash\bullet E_b)$,
\item[(3)] $h(e_a)$ is a silent transition, and
\item[(4)] $|[E_a]|<|[E_b]|$, .
\end{itemize}
\end{definition}

Using Def. \ref{def:cut_lse}, we can truncate a silently continued unfolding to acquire its `complete finite prefix' (denoted as $Fin_{e_1}$ or $\pi_{e_1}$) of $\beta_{e_1}$. Here, `complete' means any pTAR detectable with $\triangleright_s$ in $\beta_{e_1}$ can be discovered from the prefix with $\triangleright_s$ as well, and `finite' simply means the size of the prefix is finite.



Given the definition of our new type of cut-off events, the algorithm for constructing the `complete finite prefix' with respect to the silently continued unfolding of an event $e_1$ can be provided as follows.

\begin{algorithm}
  \caption{Finite Prefix of Continued Unfolding for pTAR Derivation}\label{alg:star}
  \begin{algorithmic}[1]
    \Function {FinitePrefixForPTAR}{$S,\pi,e_1$}
            \State $Fin_{e_1}\leftarrow\pi$
            \State $lse\leftarrow LSE(\pi, e_1)$
            \State \textbf{while} $lse\neq\emptyset$ \textbf{do}
            \State\indent pick an event $e=(t,B)$ from $lse$
            \State\indent $lse\leftarrow lse\backslash\{e\}$
            \State\indent \textbf{if} $[e]\cap$ cut-off$_{e_1}=\emptyset$ \textbf{then}
            \State\indent\indent add $e$ to $Fin_{e_1}$ together with a condition $(c, e)$
            \State\indent\indent \textbf{if} $h(e)$ is not silent \textbf{then} $pTAR\leftarrow pTAR\cup\{\langle h(e_1),h(e)\rangle\}$
            \State\indent\indent \textbf{if} $e$ is a local cut-off event in $Fin_{e_1}\backslash \pi$ \textbf{then}
            \State\indent\indent\indent cut-off$_{e_1}\leftarrow$cut-off$_{e_1}\cup \{e\}$
            \State\indent\indent \textbf{else}
            \State\indent\indent\indent $lse\leftarrow LSE(Fin_{e_1},e_1)$
            \State\indent \textbf{end if}
            \State \textbf{endwhile}
            \State return pTAR

    \EndFunction
  \end{algorithmic}
\end{algorithm}

\begin{theorem}
For a pair of transitions $t_1,t_2$ in a net system $S$, $t_1<_{star}t_2$ if and only if there exist in unfolding (Unf) a pair of events $e_1,e_2:h(e_1)=t_1, h(e_2)=t_2$ and $e_1\triangleright_{s} e_2$.
\end{theorem}
\begin{proof}
($\Rightarrow$)For any given pair of transitions $t_1,t_2$ in STAR relation, according to Def. \ref{def:STAR} there exists a marking $M_s$ from which we can consecutively fire a sequence of transitions $\sigma=\langle t_1,t_{i_1},t_{i_2},...,t_{i_n},t_2\rangle$, where $t_{i_1},t_{i_2},...,t_{i_n}$ are silent tasks, i.e. $M_s\xrightarrow{t_1}M_1\xrightarrow{t_{i_1}}M_2\xrightarrow{t_{i_2}}$...$\xrightarrow{t_{i_n}}M_{n+1}\xrightarrow{t_2}M_{n+2}$. In the full unfolding $Unf$ of the net system $S$, there is a configuration $\mathcal{C}$ in $\beta$ for which it holds that $Cut(\mathcal{C})=M_s$. Consequently, a sequence of events $E=\langle e_1, e_{i_1},e_{i_2},...,e_{i_n},e_2\rangle$: $h(e_{i_k})=t_{i_k}(1\leq k\leq n), h(e_1)=t_1, h(e_2)=t_2$ can be added to $\mathcal{C}$. For each event in $\{e_{i_k}\}$ that occurs after $e_1$, because $\mathcal{C}\cup E$ is a configuration, we have $\neg(e_1 \# e_{i_k})$, and therefore it holds either $e_1$ $co$ $e_{i_k}$ or $e_1<e_{i_k}$ ($e_{i_k}<e_1$ does not hold because $e_1$ can be added before $e_{i_k}$). Similarly, we know that $e_1<e_2$, because $e_1$ co $e_2$ must not hold (otherwise, we would have $t_1<_{tar}t_2$, which contradicts with $t_1<_{star}t_2$). Consider any event $e:e_1<e<e_2$. Since $e_1<e$, we know that $e\not\in\mathcal{C}$, and consequently $e\in E$. Therefore, $e$ must belong to $\{e_{i_k}\}$, and $h(e)$ is a silent transition. By Def. \ref{def:sadj}, we have $e_1\triangleright_{s} e_2$.

($\Leftarrow$) Let $E_1=\{e_{i_k}\in[e_2]|e_1\leq e_{i_k}< e_2\}$=$\{e_{i_1},e_{i_2},...,e_{i_n},n=|E_1|\}$, where $e_{i_p}<e_{i_q}\Rightarrow p<q$. As $e_1\triangleright_{s}e_2$ it holds that $e_{i_1}=e_1$ and $\{h(e_{i_2}), h(e_{i_3}),...,h(e_{i_n})\}$ are all silent transitions. By Lemma X and Lemma Z, $[e_2]\backslash\{e_2\}$ is a configuration, and $h(e_2)$ is enabled at $M_{n+1}=Mark([e_2]\backslash\{e_2\})$.

Consider $e_{i_n}$. It holds that $\forall e\in [e_2]\backslash\{e_2\}, e_{i_n}\not<e$, otherwise it contradicts with $e_{i_p}<e_{i_q}\Rightarrow p<q$. Therefore $e_{i_n}\in Max([e_2]\backslash\{e_2\})$. Again, by Lemma X and Lemma Z, $[e_2]\backslash\{e_2, e_{i_n}\}$ is a configuration, and $h(e_{i_n})$ is enabled at $M_n=Mark([e_2]\backslash\{e_2, e_{i_n}\}$ and $M_n\xrightarrow{h(e_{i_n})}M_{n+1}$. By recursively applying the above process, we have $e_{i_k}\in Max([e_2]\backslash\{e_2,e_{i_n},...,e_{i_{k+1}}\}) (k\geq 1)$. In other words, from $M_1=Mark[e_1]$, there exists a firing sequence $M_1\xrightarrow{h(e_{i_1})}M_2\xrightarrow{h(e_{i_2})}...\xrightarrow{h(e_{i_n})}M_{n+1}$, and $h(e_2)$ is enabled at $M_{n+1}$. As $e_{i_1}=e_1$ and $\{h(e_{i_2}), h(e_{i_3}),...,h(e_{i_n})\}$ are all silent transitions, we have $h(e_1)=t_1<_{star}t_2=h(e_2)$.
\end{proof}

We prove the finiteness and completeness of Algorithm~\ref{alg:star} with the following two results.

\begin{theorem}\label{thm:finfinite}
$Fin_{e_1}$ is finite.
\end{theorem}

\begin{proof}
Since $\pi$ is already finite. We only focus on the part of $Fin_{e_1}\backslash\pi$. Let $Q=\{B\}$ be the set of co-sets in $Fin_{e_1}$. As each $B\in Q$ corresponds to part of cut, each $h(B)$ corresponds to a sub-set of a reachable marking in $S$. Since $S$ has a finite number of reachable markings, the number of distinct $h(B)$ are also finite. Let $n$ be the number of different $h(B)$.

Given an event $e$ of $Fin_{e_1}\backslash\pi$, define the \emph{depth} of e as the length of a longest chain of events $g_1<g_2<...<e$ such that $g_1,g_2,...,e$ are all in $Fin_{e_1}\backslash\pi$; the depth of $e$ is denoted by $d(e)$.

Moreover, the following results hold.

\textbf{(1)} For every event $e$ of $Fin_{e_1}\backslash\pi$, $d(e)\leq n+1$, where $n$ is the number of different $h(B)$.

For every chain of events $g_1<g_2<...<g_{n+1}$, let $E_i=[g_i]\backslash\pi$, and we know that $i<j\Rightarrow Min(\bullet E_i)\subseteq Min(\bullet E_{j})$. We denote $B_i=(E_i\bullet\backslash\bullet E_i)\cup(Min(\bullet E_{n+1})\backslash Min(\bullet E_{i}))$.

Obviously every $B_i$ is a co-set. So there exist $i,j:i<j$ such that $h(B_i)=h(B_j)$, i.e. $h((E_i\bullet\backslash\bullet E_i)\cup(Min(\bullet E_{n+1})\backslash Min(\bullet E_{i})))=h((E_j\bullet\backslash\bullet E_j)\cup(Min(\bullet E_{n+1})\backslash Min(\bullet E_{j})))$. By removing $Min(\bullet E_{n+1})\backslash Min(\bullet E_j)$ from both sides, we get $h((E_i\bullet\backslash\bullet E_i)\cup(Min(\bullet E_j)\backslash Min(\bullet E_{i})))=h(E_j\bullet\backslash\bullet E_j)$. Since $g_i<g_j$, we have $E_i\subset E_j$ and therefore $|E_i|<|E_j|$, so if $h(g_i)$ is a silent transition, then $g_j$ should be recognized as a local cut-off event in $\beta_{e_1}$. Should Algorithm~\ref{alg:star} generate $g_j$, then it has generated $g_i$ before and $g_j$ is recognized as a local cut-off event of $Fin_{g_1}$, too. If $h(g_i)$ is not a silent transition, then Algorithm~\ref{alg:star} will not generate $g_{i+1},...,g_{n+1}$ if it has generated $g_i$ before, because as $e_1<g_i$, $e_1\triangleright_s g_{i+1}$ cannot hold.

\textbf{(2)} For every event $e$ of $Fin_{e_1}$, the sets $\bullet e$ and $e\bullet$ are finite. And for every $k\geq 0$, $Fin_{e_1}$ contains only finitely many events $e$ such that $d(e)\leq k$

It follows from \textbf{(1)} and \textbf{(2)} that $Fin_{e_1}$ contains only finitely many events, and by (2) it contains a finite number of conditions.
\end{proof}

\begin{theorem}\label{thm:fincomplete}
$Fin_{e_1}$ is complete in the sense that if $e_1\triangleright_s e_2$ in the full unfolding $\beta$, then $e_2\in Fin_{e_1}$.
\end{theorem}

\begin{proof}
We denote the original net system, its full unfolding and complete finite prefix as $S,\beta$及$\pi$ respectively. For a pair of events $e_a,e_b\in\beta: e_a\triangleright_s e_b$, by Thm. \ref{thm:silentbeta} we have $\exists e_1\in\pi, e_2\in\beta_{e_1}: h(e_1)=h(e_a), h(e_2)=h(e_b), e_1\triangleright_s e_2$, in which $\beta_{e_1}$ is the silently continued unfolding after $e_1$. If $[e_2]$ is not already contained by $\pi_{e_1}$, it must contain some cut-off event $e_c\subseteq$cut-off$_{e_1}$ in $\beta_{e_1}$. For such a cut-off event $e_c$, let $E_c=[e_c]\backslash\pi$, $E_1=[e_2]\backslash\pi\backslash E_c$. Furthermore let $e'_c$ be the corresponding event of $e_c$, $E'_c=\{[e'_c]\backslash\pi\}$. Obviously, there is an isomorphic mapping $I^{2}_1$ between $\Uparrow[e_c]$ and $\Uparrow[e_c']$ such that $I^{2}_1(E_2\backslash [e_c])\subseteq\beta_{e_1}$, in which there is an event $e_2'=I^{2}_1(e_2)$ and $e_1\triangleright_s e_2',h(e_2')=h(e_2)$.

According to the condition (4) of Def. \ref{def:cut_lse}, we have $|E_c'|=|[e'_c]\backslash\pi|<|[e_c]\backslash\pi|=|E_c|$. On the other hand, for $E_2=[e_2]\backslash\pi=E_c\cup E_1$ and $E'_2=[e'_2]\backslash\pi\subseteq E'_c\cup I_1^2(E_1)$, we have $|E'_2|<|E_2|$ because $|E'_c|<|E_c|$. If $[e'_2]$ is still not fully contained in $\pi_{e_1}$, then we may repeat the above procedure and find $e''_2:h(e''_2)=h(e'_2)=h(e_2), e_1\triangleright_s e''_2$ in $\beta_{e_1}$(noticing condition (3) of Def. \ref{def:cut_lse}), and for $E''_2=[e''_2]\backslash\pi$ we still have $|E''_2|<|E'_2|<|E_2|$. Since $|E|\geq 0$, this procedure can not be repeated infinitely and must terminate at some event $e_t$ such that $e_1\triangleright_s e_t, h(e_t)=h(e_2), E_t=[e_t]\backslash\pi$ is already contained in $\pi_{e_1}$. Therefore in Algorithm~\ref{alg:star}, $\pi_{e_1}$ is complete, i.e. for some event $e_2$ in $\beta_{e_1}$, if $e_1\triangleright_s e_2$, it must hold that $\exists e_t\in \pi_{e_1}:e_1\triangleright_s e_t,h(e_t)=h(e_2)$。
\end{proof}

\subsection{Prevent Unnecessary Extensions}

Based on Thm.~\ref{thm:fincomplete}, by building finite prefixes of the silently continued unfoldings after each event in the original CFP, we can derive all pTARs using only silent adjacency and $co$ relations. We use the following definition to denote the `sum' of all silently continued CFPs.

\begin{definition}[Merge of Silently Extended CFPs] Let $E=\{e_i\}$ be the set of events of a complete finite prefix. Then the merged CFP of the locally extended CFPs after these unfinished events is denoted as: $$\mu(\pi)=\displaystyle \bigcup_{e_i\in E}Fin_{e_i}.$$
\end{definition}

By the above results, it is easy to see that every pair of transitions in pTAR correspond to at least one pair of events in $\mu(\pi)$ that are in silent adjacency. And therefore the pTAR computation algorithm based on silently continued unfolding can be given as Algorithm~\ref{alg:continued}.

\begin{algorithm}
  \caption{Deriving all pTARs based on silently continued unfolding}\label{alg:continued}
  \begin{algorithmic}[1]
  \label{alg:derivestar}
    \Function {UnfoldLocal}{$S,\pi$}
        \State $pTAR\leftarrow\emptyset$
        \State $Extended\leftarrow\emptyset$
        \State \textbf{for each} $e_1,e_2\in\pi: e_1\triangleright_s e_2$ \textbf{do}
        \State \indent $pTAR\leftarrow pTAR\cup\langle h(e_1),h(e_2)\rangle$
        \State $C_{init}\leftarrow$ cut-off events of $\pi$
        \State \textbf{for each} $e_c\in C_{init}$ \textbf{do}
        \State \indent \textbf{for each} $e_1\in\pi:e_1\triangleright_s e_c$ \textbf{do}
        \State \indent \indent \textbf{if} $e_1\not\in Extended$ \textbf{then}
        \State \indent \indent\indent $Extended\leftarrow Extended\cup\{e_1\}$
        \State \indent \indent\indent $Fin_{e_1}=\Call{FinitePrefixForPTAR}{S,\pi,e_1}$
        \State \indent \indent\indent \textbf{for each} $e_b\in Fin_{e_1}\backslash\pi:e_a\triangleright_s e_b$ \textbf{do} 
        \State \indent \indent\indent \indent $pTAR\leftarrow pTAR\cup\langle h(e_a),h(e_b)\rangle$
        \State \indent \indent\indent $\pi\leftarrow\pi\cup Fin_{e_1}$
        \State \indent \indent \textbf{end if}
        \State \textbf{return} $pTAR$

    \EndFunction
  \end{algorithmic}
\end{algorithm}

However, the above algorithm is only a baseline result, and can be further improved. In fact, we notice that the silent continued unfoldings might overlap with existing structures inside the original CFP. These overlapped parts do not offer new information for pTAR derivation, and incur serious amount of computational costs (see the following figure).

\begin{figure*}
    \label{fig:lseex1} 
    \includegraphics[width=6.6in]{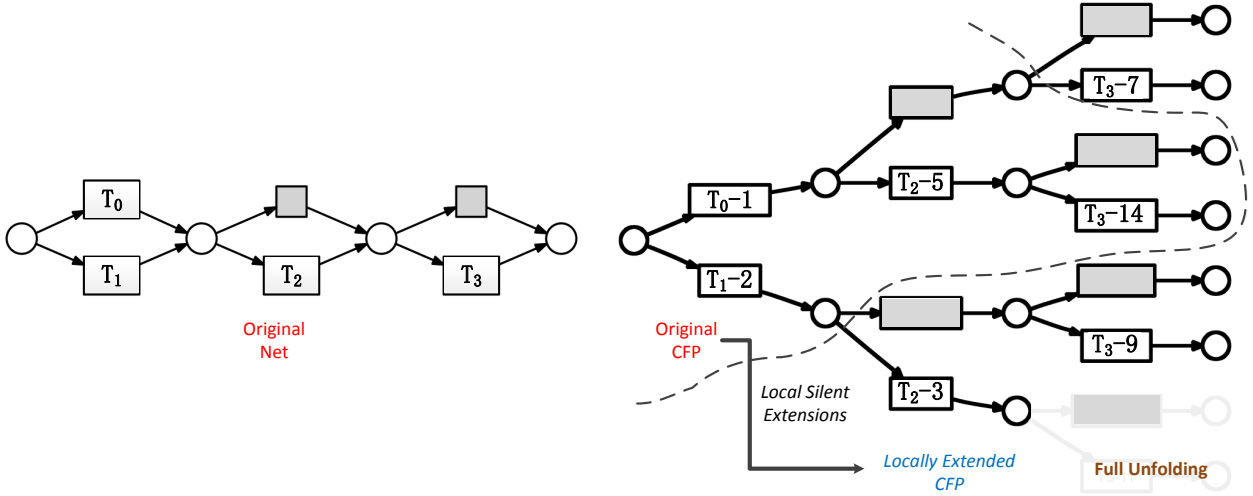}
    \caption{Merged continued CFPs can be as large as the reachability graph.}
\end{figure*}

In such cases, we may re-use existing structures in CFP as an isomorphism of the behavior after cut-off events. Next, we show that this can be realized by redirecting the post-arcs of some cut-off events. The basic purpose is to `link' the cut-off events back to the existing nodes in the CFP. Actually, we may do this when the cut-off events have identical `context' with its corresponding event, as the following definition shows.

\begin{definition}[Contextual Isomorphism and Redirected CFP]\label{def:context} In an unfolding prefix $\pi$, let $e$ be a cut-off event and $e'$ its corresponding event such that $e'$ is not a cut-off event. Let $\gamma: e\bullet\rightarrow Cut([e'])$ denote an injective mapping from $e\bullet$ to the conditions in $Cut([e'])$. If there exists a mapping $\gamma$ such that
\begin{itemize}
\item[(1)] $h(c)=h(\gamma(c)),c\in e\bullet$, and
\item[(2)] $\forall c\in e\bullet, c'\not\in (e\bullet\cup \gamma(e\bullet)):c' $ co $c\Leftrightarrow c'$ co $\gamma(c)$.

\end{itemize}
we say that $e$ is contextually isomorphic with $e'$ under the mapping $\gamma$ (denoted as $e \stackrel{\gamma}{\sim} e'$).
\end{definition}

Note that the above properties are not implied by the definition of cut-off events, which only guarantees $Cut([e])$ and $Cut([e'])$ are mapped to identical set of places. However, we notice they can be applied for most of the cut-off events in the practical models.

When a cut-off event $e$ is contextually isomorphic with some $e'$, we can `link' $e$ towards the conditions in $Cut([e'])$ according to the $\gamma$ mapping so as to reuse the existing structure after $Cut([e'])$. For every condition $c$ in $e\bullet$, we redirect the arc following $e$ from their original target condition $c$ to $\gamma(c)$, and we call these arcs \emph{redirected arcs}. After that, we remove the original conditions in $e\bullet$ from the CFP. Note that the redirected arcs introduced should not be taken as the same as the original arcs in the unfolding, otherwise, they will disrupt the restriction of branching process that $|\bullet c|\leq 1$, and it may also disrupt the acyclic property of the original causal relation `$<$'. Therefore the redirected arcs are used separately for pTAR derivation only. We use $pre(c)$ to denote the set of starting events whose post-arcs are redirected to a condition $c$, which is defined as:
$$
    pre(c)=\{e\in\pi|\exists c'\in e\bullet, e'\in\pi: e\stackrel{\gamma}{\sim}e'\wedge\gamma(c')=c\}
$$

Intuitively, we could call the events in $pre(c)$, whose post-arcs are redirected, as `\emph{soft}' cut-off events. We denote the redirected CFP of the original CFP $\pi$ as $(\pi\backslash E_c\bullet,pre)$, where $E_c$ is the set of `\emph{soft}' cut-off events and $pre$ indicates the redirected arc relations.

In the following example, we could see that the truncated behaviors after cut-off event $T_2\mbox{-}4$ and the two silent events are consecutively continued after the conditions which redirected arcs point to. We call such events $soft$ cut-off events because we can find their succeeding behaviors by contextual isomorphism without performing further local extensions.

\begin{figure*}
  \centering
  \subfloat[][A net with multiple cut-off events]{
    \label{fig:lseex2:a} 
    \includegraphics[width=4.4in]{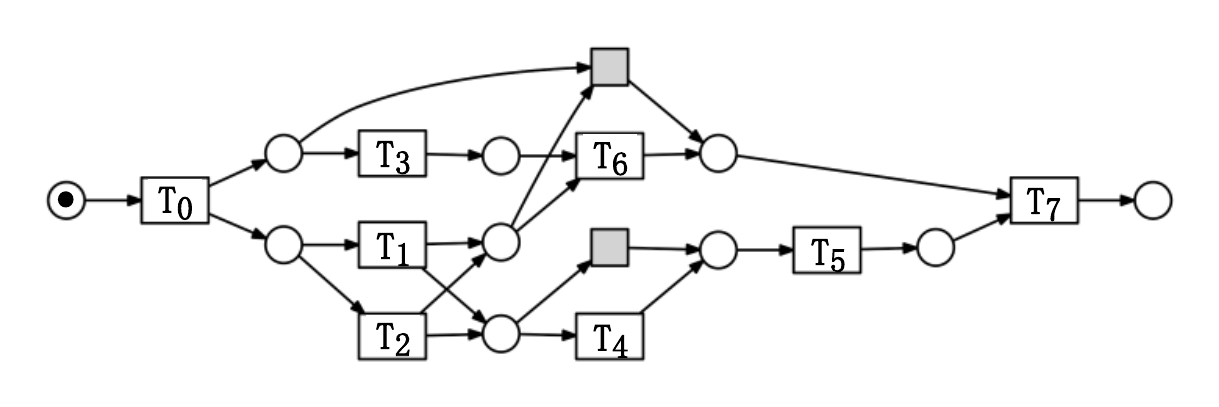}}\\
  \subfloat[][Behaviors after `soft' cut-off events can be consecutively continued.]{
    \label{fig:lseex2:b} 
    \includegraphics[width=5.4in]{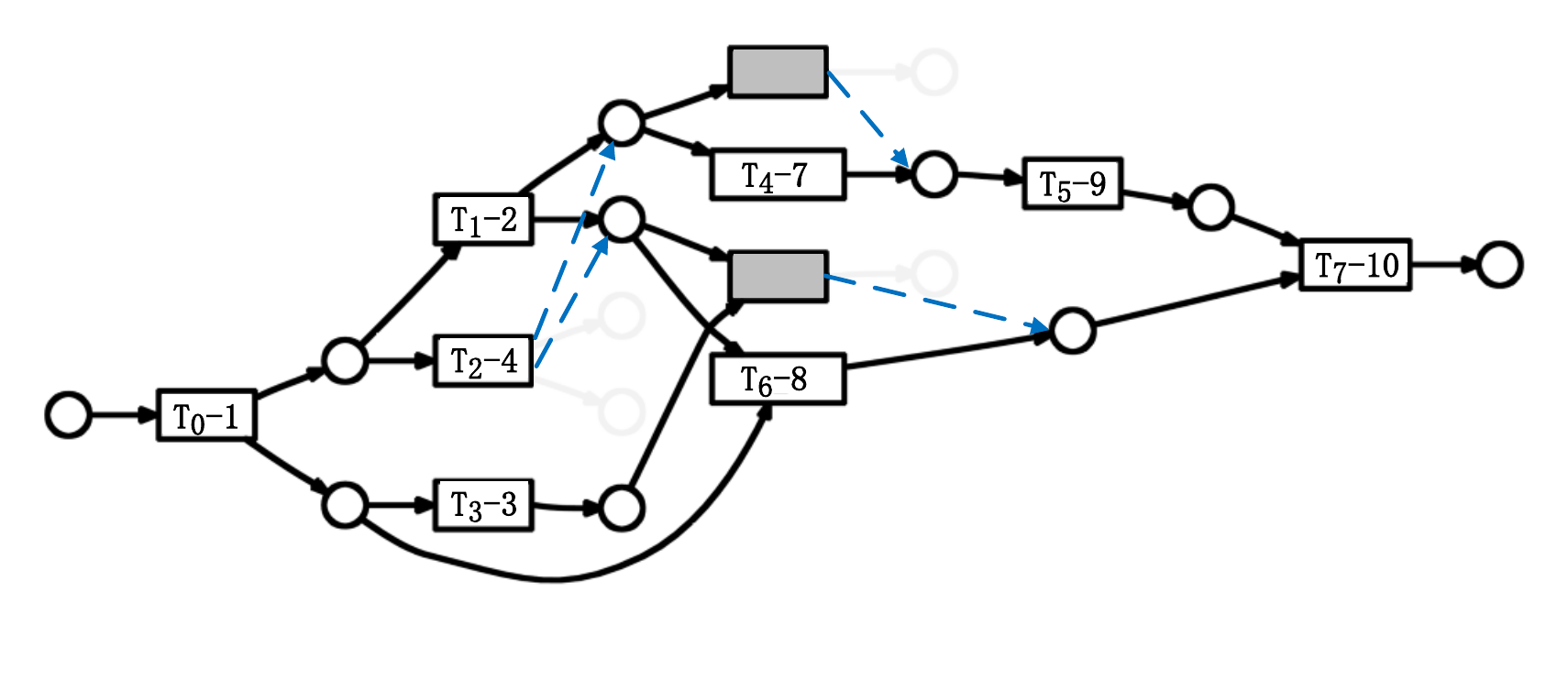}}
  \caption{Illustration of `soft' cut-off events}
\end{figure*}

Note that contextual isomorphism differs from the original isomorphism, i.e. the property mentioned and used by McMillan and Esparza in \cite{UNF95} and \cite{CPU02}. The previous isomorphism cannot guarantee the partial silent adjacency structures starting from $e$ can be immediately continued after $\Uparrow[e']$, since other cut-off events in $\Uparrow[e']$ might have prevented the unfolding of corresponding continued silent adjacency structures in the original CFP.

Compared to the previous isomorphism \cite{CPU02}, contextual isomorphism puts additional constraints on the co-sets `near' the cut-off events. With these additional constraints, it ensures that we can look for the isomorphic silent adjacency structures. Thus, less local extensions are required.

\begin{definition}[Redirected Silent Adjacency] \label{def:redirect} For an ordered pair of nodes $\langle n_1,n_2\rangle$ in a redirected CFP, they are in redirected silent adjacency, denoted as $n_1\triangleright'_s n_2$, if and only if one of the following conditions holds:
\begin{itemize}
\item[(1)] $n_1\in\bullet n_2$, or
\item[(2)] $n_2$ is a condition, $\exists e\in\bullet n_2\cup pre(n_2):n_1\triangleright_s' e$ and $h(e)$ is a silent transition, or
\item[(3)] $n_2$ is an event, and $\forall c\in \bullet n_2: n_1\triangleright_s' c\vee n_1$ co $c$.
\end{itemize}
\end{definition}

\begin{algorithm}
  \caption{Derive All pTARs from a Net System Using Redirected CFPs}\label{alg:redirected}
  \begin{algorithmic}[1]
    \label{alg:deriveaugstar}

    \Function {DerivePTARs}{$S,\pi_0$}
        \State $pTAR\leftarrow\emptyset$
        \State $Extended\leftarrow\emptyset$
        \State $\pi\leftarrow\pi_0$
        \State \textbf{for each} cut-off event $e_c\in\pi_0$ \textbf{do}
        \State \indent\textbf{if} $\not\exists e':e\sim e'$ \textbf{then} hard-cut-off$\leftarrow$ hard-cut-off $\cup \{e'\}$
        \State \indent\textbf{else for each} $e'_c\in\pi:e'_c\sim e_c$ \textbf{do}
        \State \indent\indent\indent\textbf{for each} $c\in e_c\bullet$ \textbf{do} $\pi\leftarrow\pi\cup\{\langle e_c,\gamma(c)\rangle\}$
        \State \textbf{for each} $e_1,e_2\in\pi: e_1\triangleright_s e_2$ \textbf{do}
        \State \indent $pTAR\leftarrow pTAR\cup\langle h(e_1),h(e_2)\rangle$
        \State \textbf{for each} $e_c\in$ hard-cut-off \textbf{do}
        \State \indent \textbf{for each} $e_1\in\pi:e_1\triangleright_s e_c$ \textbf{do}
        \State \indent \indent \textbf{if} $e_1\not\in Extended$ \textbf{then}
        \State \indent \indent\indent $Extended\leftarrow Extended\cup\{e_1\}$
        \State \indent \indent\indent $Fin_{e_1}=\Call{FinitePrefixForPTAR}{S,\pi,e_1}$
        \State \indent \indent\indent \textbf{for each} $ e_a,e_b :e_b\in Fin_{e_1}\backslash\pi,e_a\triangleright_s' e_b$ \textbf{do}
        \State \indent \indent\indent\indent $pTAR\leftarrow pTAR\cup\langle h(e_a),h(e_b)\rangle$
        \State \indent \indent\indent $\pi\leftarrow\pi\cup Fin_{e_1}$
        \State \indent \indent \textbf{end if}
        \State \textbf{return} $pTAR$
    \EndFunction
  \end{algorithmic}
\end{algorithm}

It is easy to see that the above definition of $\triangleright_s'$ is equivalent with the original $\triangleright_s$ within the original CFP $\pi$. The major difference lies in (2), for not only the original causal relations but also the redirected arcs represented by $pre(c)$ are taken into account while backtracking. In redirected CFPs, we need only replace the $\triangleright_s$ in Algorithm~\ref{alg:unfsilent},~\ref{alg:star} in the last section with $\triangleright_s'$ and we can get the pTAR derivation algorithms for redirected CFPs. So a pTAR computing algorithm using redirected CFP can now be given as Algorithm \ref{alg:redirected}, for which the original net $S$ its CFP $\pi_0$ are assumed to be given as input.

\begin{theorem} For two transitions $t_1,t_2$ $t_1<_{ptar}t_2$ if and only if there exists $e_1,e_2: h(e_1)=t_1,h(e_2)=t_2$ such that $e_1\in\pi, e_2\in\mu(\pi\backslash E_c\bullet)$ and $e_1\triangleright_{s}'e_2$.
\end{theorem}

\begin{proof} ($\Rightarrow$)By Theorem~\ref{thm:fincomplete}, for the original CFP $\pi$, there exists $e_1\in\pi, e'_2\in\beta_{e_1}\subseteq\mu(\pi)$ such that $h(e_1)=t_1,h(e_2)=t_2$ and $e_1\triangleright_{s}e_2$. Since in the original $\pi$, $\triangleright_s'$ and $\triangleright_s$ is equivalent, the case when $e_2\in\pi$ is trivial. Therefore we only focus on the cases when $[e_2]\backslash\pi\neq\emptyset$.

(1) Consider the set of events $E_{ext}=[e_2]\backslash\pi$ and its minimal conditions $C_{min}=Min(\bullet E_c)$. Without loss of generality, let $E_{ext}=\{e_{m_1},e_{m_2},...,e_{m_k},e_2\}$, where $e_{m_i}<e_{m_j}\rightarrow i<j$. Since $C_{min}\subseteq\pi\backslash E_c\bullet$, then since $\triangleright_s\subseteq\triangleright_s'$, we have $E_{ext}\subseteq\mu(\pi\backslash E_c\bullet)$ and therefore the proposition holds.

(2) If $C_{min}\not\subseteq\pi\backslash E_c\bullet$, then the pre-events of the conditions in $C_{min} \backslash (\pi\backslash E_c\bullet)$ are redirected due to contextual isomorphism. Given property (2) of context isomorphism, the new conditions they link to still form a co-set $C'_{min}$ isomorphic with $C_{min}$ and $\forall c\in C'_{min}:e_1\triangleright_s' c$. By isomorphism, a sequence of events $E_{ext}'=\{e'_{m_1},...\}$ ($|E_{ext}'|\geq 1$) isomorphic with a prefix of $E_{ext}=\{e_{m_1},e_{m_2},...,e_{m_k},e_2\}$ could be found after $C'_{min}$ until cut-off events are encountered, and it is easy to prove that $e_1\triangleright_s' e'_{m_i}$. If $E_{ext}'$ does not contain any cut-off event, then it will correspond to the full sequence of $E_{ext}$, and the proposition already holds. Otherwise, without loss of generality, suppose $E_{ext}'=\{e'_{m_1},...,e'_{m_p}\}$. Let $C''_{min}=(C'_{min}\backslash\bullet E_{ext}')\cup E_{ext}'\bullet$. If $C''_{min}\subseteq(\pi\backslash E_c\bullet)$, then by the same reasoning in Thm. \ref{thm:fincomplete} and (1), after $C''_{min}$, a sequence with the form of $\{e'_{m_{p+1}},...,e'_2\}, h(e'_2)=t_2$, can be found by performing local silent extensions such that $e_1\triangleright_s' e'_2, e'_2\in\mu(\pi\backslash E_c\bullet)$. If $C''_{min}\not\subseteq (\pi\backslash E_c\bullet)$, we could iterate (2) again from $C''_{min}$. Since $E_{ext}$ is finite, the iteration cannot be repeated infinitely, so we will eventually arrive at some $e_2'\in\mu(\pi\backslash E_c\bullet)$.

Let $E_p=\{e|e\in\bullet c,c\in C_{min}\}$. Since $\forall c\in C_{min}:e_1\triangleright_s c$, we know that $\forall e_p\in E_p:e_1\triangleright_s c$.

Let $E_r\subseteq\bullet C_{min}$ be the set of events that are redirected in the pre-events of the conditions in $C_{min}$.  Given property (2) of context equivalence, even if some pre-events of $C_{min}$ not included in $E_M$ are redirected in $\pi'$, the new conditions they link to still form a co-set $C'_{min}$ in $\pi'$ that is isomorphic with $C_{min}$ and $\forall c\in C'_{min}:e_1\triangleright c$. Starting from $C'_{min}$, we can find an event $e_2$ either through local silent extensions in $\mu(\pi')$ (by Proposition X) or as existing event in $\pi'$, such that $e'_2\in\mu(\pi')$ and $e_1\triangleright_s e'_2$

Since only the events in $E_M\cap\pi$ can be redirected, and only the events in $E_M\backslash\pi$ would affected, the case of $E_M\subseteq\pi$ is trivial. We now consider the case when $E_M\backslash\pi\neq\emptyset$.

(1) If none of the events in $E_M\cap\pi$ has been redirected in $\pi'$. Consider the set of conditions $C_{min}=Min(\bullet (E_M\backslash\pi))$ in the original $\pi$, and from $e_1\triangleright_s e_2$ we have $\forall c\in C_{min}:e_1\triangleright c$. Given property (2) of context equivalence, even if some pre-events of $C_{min}$ not included in $E_M$ are redirected in $\pi'$, the new conditions they link to still form a co-set $C'_{min}$ in $\pi'$ that is isomorphic with $C_{min}$ and $\forall c\in C'_{min}:e_1\triangleright c$. Starting from $C'_{min}$, we can find an event $e_2$ either through local silent extensions in $\mu(\pi')$ (by Proposition X) or as existing event in $\pi'$, such that $e'_2\in\mu(\pi')$ and $e_1\triangleright_s e'_2$.

(2) Suppose some $e_{m_i}$ in $E_M\cap\pi$ has been redirected in $\pi'$. Then since either $e_{m_i}<e_{m_{i+1}}$ or $e_{m_i}$ co $e_{m_{i+1}}$, by property (2) of Def. X, we know that in the redirect CFP $\pi'$, some event $e_{m_{i+1}}': h(e_{m_{i+1}}')=h(e_{m_{i+1}})$ and obviously $e_1\triangleright_s e_{m_{i+1}}'$. If $e_{m_{i+1}}'$ is not a cut-off event, then we can find in $\pi'$ an $e_{m_{i+2}}'$ which is isomorphic with $e_{m_{i+2}}$ and $e_1\triangleright_s e_{m_{i+2}}'$. This can be iterated until we reach some event isomorphic with $e_2$, which proves the proposition, or some $e_{m_j}'$ that is a cut-off event. If $e_{m_j}'$ is not redirected, then by the property of contextual equivalence and Prop. X, an event $e'_2$ after $e_{m_j}'$ can be discovered through local silent extensions such that $e'_2\in\mu(\pi')$ and $e_1\triangleright_s e'_2$. If $e_{m_j}'$ is redirected, then we can iterate the process of (2). This iteration cannot continue infinitely because $E_M$ is finite, and eventually we will arrive at $\exists e'_2\in\mu(\pi')$ and $e_1\triangleright_s e'_2, h(e_1)=t_1$ and $h(e'_2)=t_2$.

$\Leftarrow$) Given that $e_1\triangleright_s' e_2$, consider the co-set $B_0=\bullet e'_2$. As $\forall c\in\bullet e_2:c$ co $e_1$ or $e_1\triangleright_s' c$, let $B_{0_{co}}=\{c\in B_0$ co $e_1\}$, $B_{0_{prec}}=\{c\in B_0|e_1\triangleright_s' c\}$. Obviously $B_0=B_{0_{co}}\cup B_{0_{prec}}$.

Take any condition $c'$ from $B_{0_{prec}}$, since $e_1\triangleright_s' c'$, it follows that $\exists e_k\in\bullet c'\cup prec(c'): e_1\triangleright_s' e_k\}$ and $h(e_k)$ is a silent transition.

If $e_k<c'$, then let $B_1=(B_0\backslash e_k\bullet)\cup \bullet e_k$. We have $e_k$ co $B_0\backslash e_k\bullet$ (obviously $e_k<c'$ cannot hold, and for any $c'\in B_0\backslash e_k\bullet$, if $e_k\#c\vee c<e_k$, it will lead to a contradiction with the fact that $B_0$ is a co-set). We can prove that $\bullet e_k$ are in $co$ with $B_0\backslash e_k\bullet$ (for any $c''\in \bullet e,c'\in B_0\backslash e_k\bullet$, any one of $c''\#c'$, $c''<c'$ or $c'<c''$ would lead to the contradiction with the fact that $e_k$ co $c'$) and consequently $B_1=(B_0\backslash e_k\bullet)\cup \bullet e_k$ is a co-set.

If $e_k\in prec(c')$ then let $B_1=(B_0\backslash pre^{-1}(e_k))\cup \bullet e_k$. In the original $\pi$, by the property (2) of Def. \ref{def:context}, as $e_k\bullet$ have isomorphic context with $pre^{-1}(e_k)$, we can similarly prove that $e_k$ are in $co$ with $B_0\backslash pre^{-1}(e_k)$. Consequently, $\bullet e_k$ are in $co$ with $B_0\backslash pre^{-1}(e_k)$ and therefore $B_1=(B_0\backslash pre^{-1}(e_k))\cup \bullet e_k$ is a co-set. In both cases $B_1$ is a co-set Similarly, we could have $B_{1_{co}}=\{c\in B_1$ co $e_1\}$, $B_{1_{prec}}=\{c\in B_1|e_1\triangleright_s' c\}$.

In both cases, we could generate from $B_0$ a new co-set $B_1$ such that $e_k$ is enabled at some reachable marking $M_k$ that contains $h(B_1)$, and the firing of $h(e_k)$ from that reachable marking will lead to a reachable marking $M_{k+1}$ that contains $B_0$ and therefore enables $h(e_2)$, which gives $M_{k}\xrightarrow{h(e_k)}M_{k+1}$, and $h(e_2)$ is enabled at $M_{k+1}$.

Again, since $e_1\triangleright_s' e_k$, we have $\forall c\in \bullet e_k:c$ co $e_1$ or $e_1\triangleright_s' c$. Therefore, as $B_1\backslash B_0=\bullet e_k$, we have $\forall c\in \bullet B_1:c$ co $e_1$ or $e_1\triangleright_s' c$. Again, we can partition $B_1$ as $B_{1_{co}}=\{c\in B_1$ co $e_1\}$ and $B_{1_{prec}}=\{c\in B_1|e_1\triangleright_s' c\}$ ($B_1=B_{1_{co}}\cup B_{1_{prec}}$). Take any condition $c$ from $B_{1_{prec}}$, and we can repeat the above iterations to find the next event $e_{k-1}$ such that $h(e_{k-1})$ is a silent transition and $M_{k-1}\xrightarrow{h(e_{k-1})}M_{k}\xrightarrow{h(e_k)}M_{k+1}$. Since $\mu(\pi)$ is finite, we could arrive with finite steps at $e_1$ and $M_1\xrightarrow{h(e_1)}...M_{k}\xrightarrow{h(e_k)}M_{k+1}$ and $h(e_2)$ is enabled at $M_{k+1}$. Therefore we have $h(e_1)<_{ptar}h(e_2)$.

In case (1) let $B_0'=\{c\in B_0|c\in e_{m_k}\bullet\}$. Since $e_{m_k}\triangleright_s' e_2$ and $e_{m_i}<e_{m_j}\rightarrow i<j$, we can easily prove that $B_{k-1}=(B_k\backslash B'_k)\cup \bullet e_{m_k}$ is a co-set, and $h(e_{m_k})$ is `enabled' at any reachable marking $M_k$ containing $h(B_{k-1})$. And since $B'_k\subseteq e_{m_k}\bullet$, the firing $h(e_{m_k})$

In case (2) let $B_0'=\{c\in B_0|e_{m_k}\in pre(c)\}$. By property (2) of Def. X, we can also easily prove that $B_{k-1}=(B_k\backslash B'_k)\cup \bullet e_{m_k}$ is a co-set, and $h(e_k)$ is `enabled' at any reachable marking containing $h(B_{k-1})$.

\end{proof}

In Algorithm~\ref{alg:redirected}, since we can re-use existing CFP structures to observe behaviors after soft cut-off events, significant amount of computational costs can be reduced, as will be demonstrated with detailed performance analysis in the experiment section.

\section{Performance Evaluation}\label{sec:Eval}

\subsection{TAR Computing Performance}
We conduct experiments to evaluate the effectiveness of our proposed CFP-based algorithms using both industrial (see \cite{ISC09} for further details) and artificial process model datasets. We have implemented our algorithms based on the open source BPM tool jBPT\footnote{http://code.google.com/p/jbpt} and all our programs and the process model datasets have been made available on-line\footnote{http://laudms.thss.tsinghua.edu.cn/trac/Test/raw-attachment/wiki/Share/CTAR.zip}. All experiments were run on a PC with Intel Dual Core I5 CPU@2.8G, 4G DDR3@1333MHz, and Windows 7 Enterprise OS. For simplicity, we focus on the TAR computing performance in this experiment, as many existing algorithms do not support pTAR computation. But still, this could demonstrate the performance improvement of our accelerated method (IMPROVE) against the baseline coverability based method (GENERAL) and other existing methods, since clearly the improvement in pTAR computing performance, if there is any, will only be greater than that of TAR.
\subsubsection{Real-world Data}\label{subsec:RsnEva}
Table 2 summarizes the features of our experiment datasets \cite{ISC09} consisting of 5 libraries, from which we extract all the bounded models for evaluation. The table also shows the degree of concurrency found in the process models, i.e. the maximum and average number of tokens that occur in a single reachable non-error state of the process. Since RG-based algorithms may encounter state-explosions when applied on nets with high concurrencies, we count the target net as ``Time Limit Exceeded for RG'' if it fails to construct a RG within acceptable time limit, as shown in the last row of the table.

\begin{table*}
\centering
\caption{Features of the five process model libraries used in experiments}
\label{tbl:RsnEva}
\begin{tabular}{|c|c|c|c|c|c|}
\hline
& \textbf{A2} &\textbf{B2} & \textbf{B3} & \textbf{B4} & \textbf{M1}\\ \hline
\texttt{Num. of Bounded Nets}&256&392&338&272&27 \\\hline
\texttt{Avg. Transitions/Places}&28/44&29/43&27/41&28/41&48/45 \\\hline
\texttt{Avg./Max Concurrency}&2/13&8/14&16/66&14/33&2/4 \\\hline
\texttt{Time-Limit Exceeded for RG}&0&39&36&24&3 \\\hline
\end{tabular}
\end{table*}


First, we compare the time costs (see Fig. \ref{fig:perf:a}) of our algorithm GENERAL (Algorithm~\ref{alg:basic}) which is based on coverability reduction against Jin's approach (JIN) based on the primitive unfolding derivation rules in \cite{ER12} (which does not support arbitrary bounded nets since it assumes free-choice WF-nets), and two RG-based algorithms (pure RG-based algorithm in \cite{pTAR12} (RG) and Zha's improved RG based approach (ZHA) \cite{TAR10}). Time costs of building CFPs of unfoldings alone (CFP-BUILD) in each library are also shown in Fig. \ref{fig:perf:a}. It can be observed that as nets in B2, B3 and B4 contain more concurrent structures, GENERAL performs significantly better than all RG-based algorithms while only being a bit slower than RG in A2, M1, which can be explained by the lacking of concurrent structures in these two libraries. By comparing their overall time cost (TOTAL), we observe that GENERAL is around 11.98 times and 9.78 times faster than RG and ZHA respectively. We also see that whereas JIN only supports sound and free-choice nets, our GENERAL achieves a much greater generality (supports any bounded net system) at the expense of only a bit more overhead and shows comparable performance with JIN (only 1.1\% more time cost in total).

\begin{figure}
\centering
  \subfloat[][Time cost comparison on bounded nets]{
    \label{fig:perf:a} 
  \includegraphics[width=3.0in]{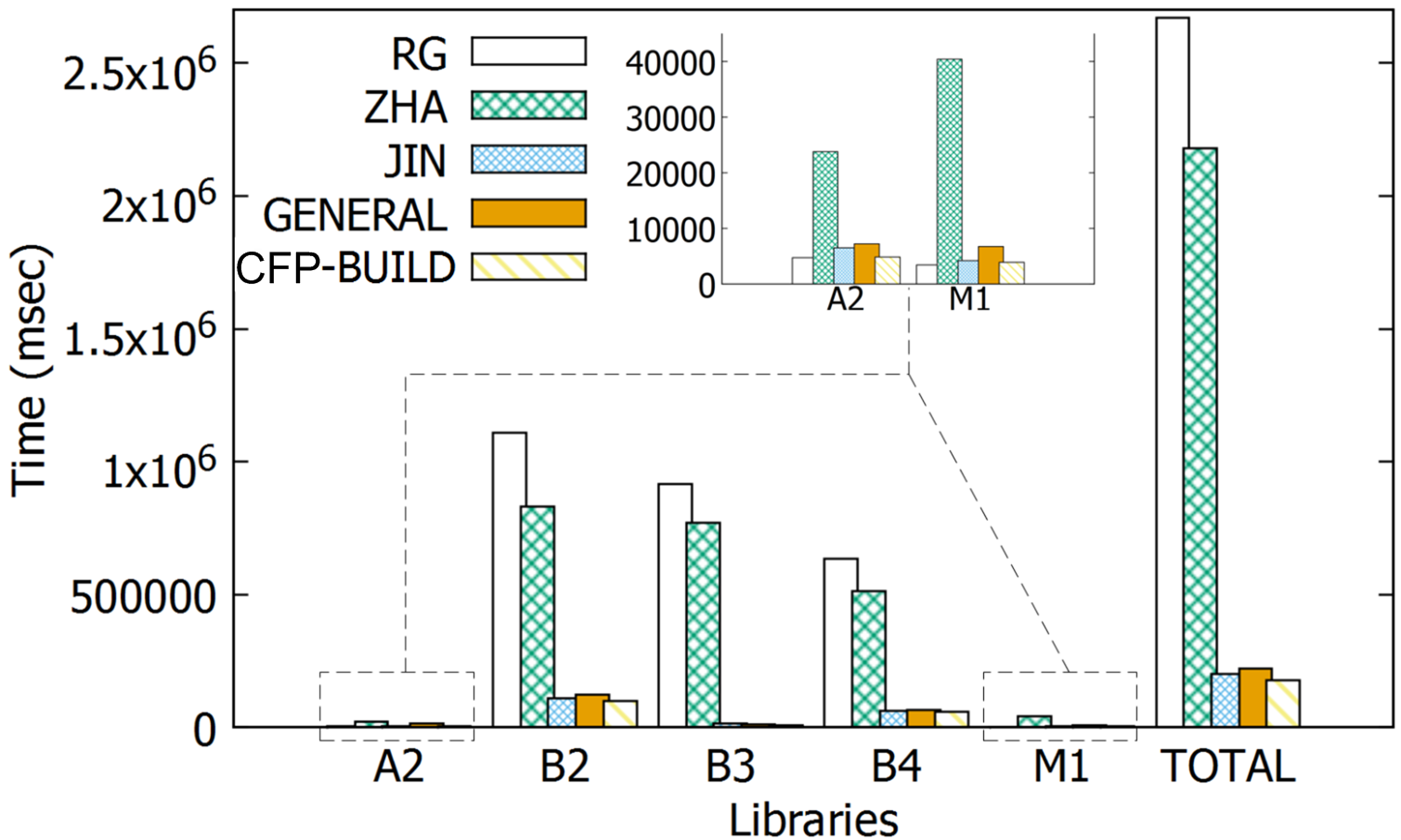}}
  \hspace{0.03in}
  \subfloat[][Overhead of GENERAL \& IMPROVED]{
    \label{fig:perf:b} 
    \includegraphics[width=3.2in]{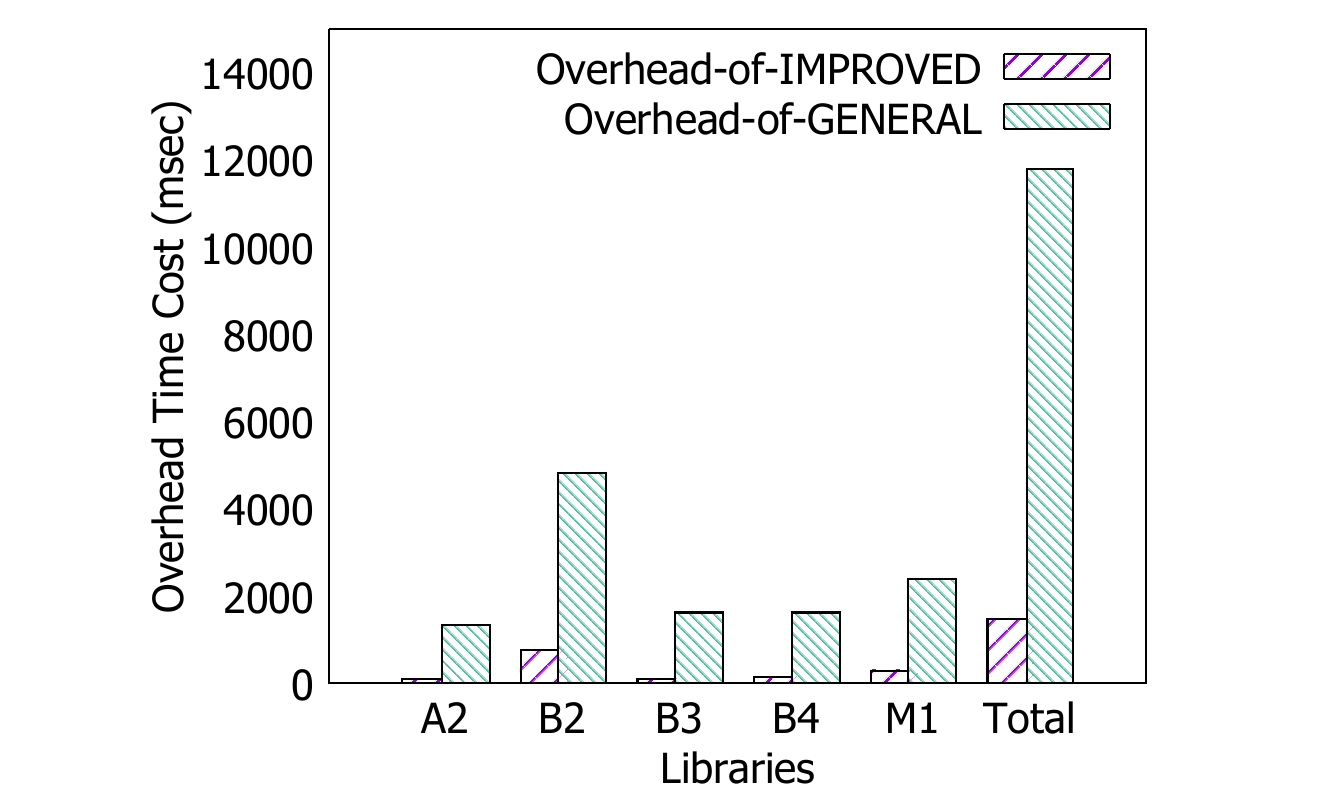}}
  \caption{Efficiency evaluation of proposed TAR derivation algorithms}
  \label{fig:perf} 
\end{figure}

Furthermore, we observe the acceleration effect of the Algorithm~\ref{alg:final} (IMPROVED), which exploits the relations in CFP further than GENERAL  (Fig. \ref{fig:perf:b}). We compute the overhead of IMPROVED and GENERAL, i.e., their total TAR derivation time cost minus CFP-BUILD time, which costs the same for both algorithms. In Fig. \ref{fig:perf:b}, the total overheads for GENERAL to derive TAR from CFPs are around 7 times larger than IMPROVED, which indicates that exploiting more CFP relations results in significant acceleration.


\subsubsection{Scalability Evaluation}\label{subsec:PfmEva}
We compare the scalability of GENERAL, IMPROVED with existing JIN, ZHA and RG on artificial Petri nets with growing concurrency structures. In Test A, each net is a single parallel structure with exactly one transition on each branch, and we observe how the increase in the number of concurrent branches affects TAR derivation time. In Test B, we generate concurrent branches of a fixed number (i.e. 5) and observe the effect of the number of transitions on each branch on the performance of algorithms.

\begin{figure}
  \centering
  \subfloat[][Test-A: Breadth-Growth]{
    \label{fig:scala:b} 
    \includegraphics[width=2.7in]{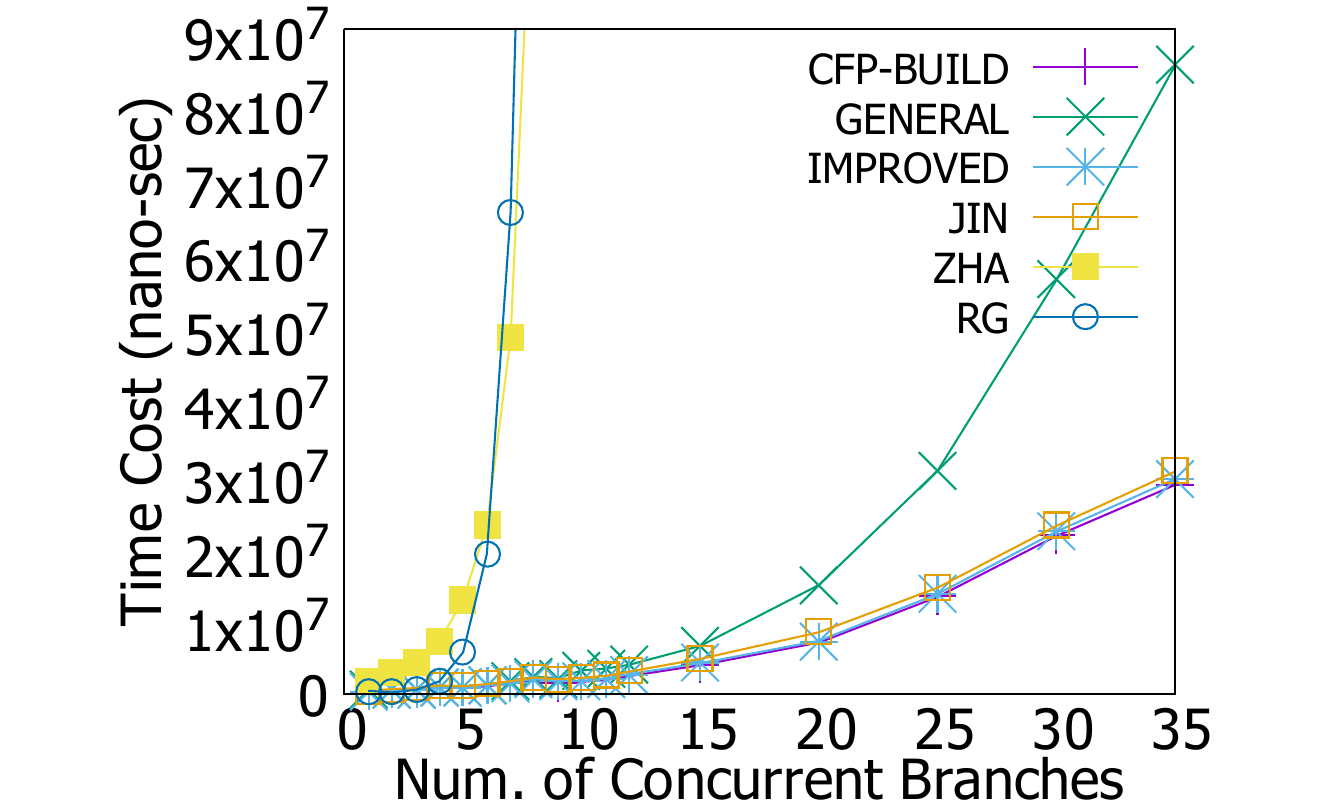}}
  \hspace{0.03in}
  \subfloat[][Test-B: Depth-Growth]{
    \label{fig:scala:a} 
    \includegraphics[width=2.7in]{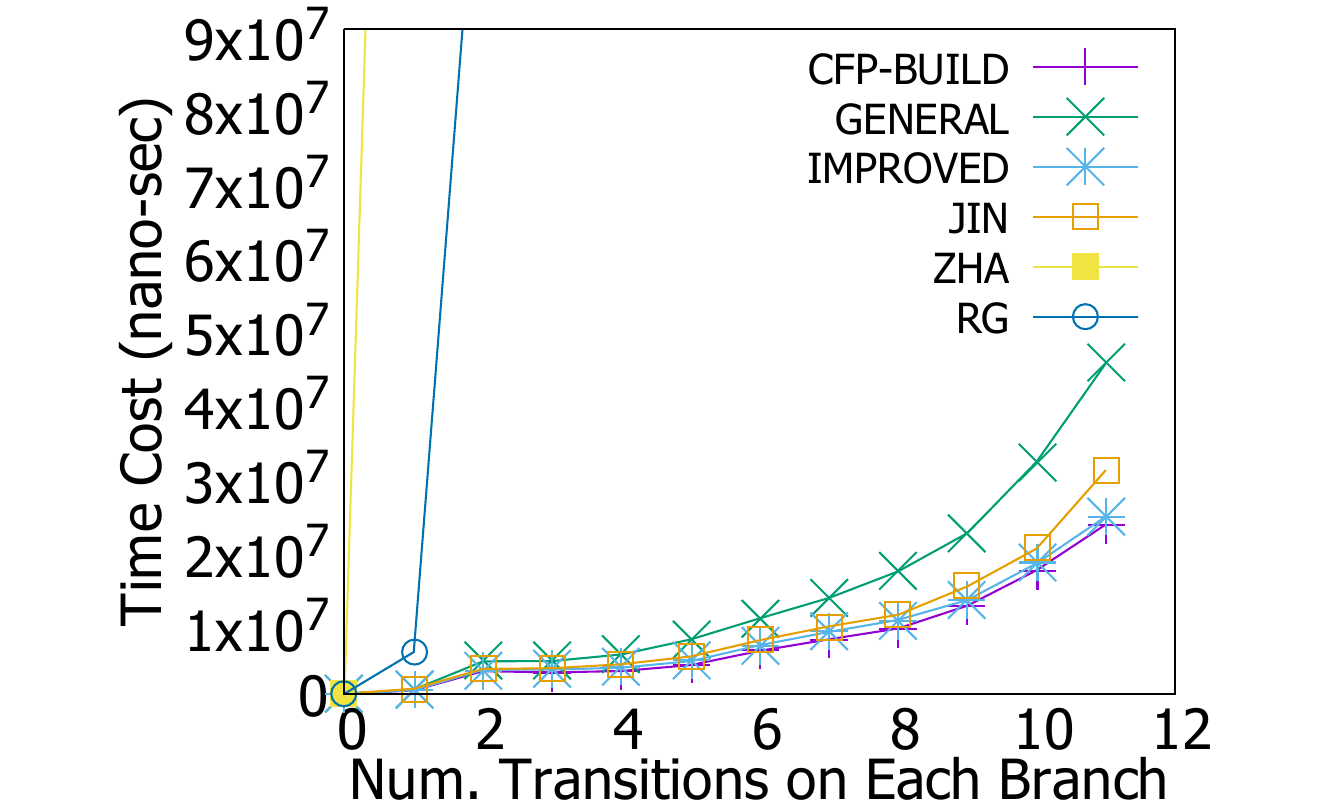}}
  \label{fig:subfigjump} 
  \caption{Scalability evaluation of TAR derivation algorithms}
\end{figure}

In both tests, CFP-based algorithms (GENERAL, IMPROVED, JIN) scale much better than RG-based algorithms (RG, ZHA). But JIN does not support non-free-choice nets or pTAR whereas GENERAL and IMPROVED do. Among CFP-based algorithms, IMPROVED scales the best with the growing size of concurrency structures, which is more efficient than GENERAL in both tests and has comparable or even better performance than JIN.

\subsection{pTAR Computing Performance}

To meet their needs for analysis, users may treat various sets of transitions as `projected' transitions (and the rest silent transitions). Therefore, the percentage of silent transitions may differ among scenarios. To observe the performance of pTAR derivation algorithms under this diversity, we run our algorithms on data-sets with different percentages of silent tasks. These data-sets are prepared by marking the same IBM industrial process data-set \cite{ISC09} with varying percentages of silent transitions. We prepare data-sets with five levels of silent task percentage, namely, 10, 30, 50, 70 and 90 percent. Fig.~\ref{fig:ptarperf:a} shows the total time cost to derive pTAR from these data-sets under varying percentages of silent transitions, using the three algorithms we have discussed. In the next figures, RG stands for the original reachability-graph-based pTAR computation algorithm \cite{pTAR12}, CU and RCU denote our algorithm based on locally continued unfolding and its accelerated version using redirected CFP. Fig.~\ref{fig:ptarperf:b}, ~\ref{fig:ptarperf:c} and ~\ref{fig:ptarperf:d} shows the detailed time cost spent on each library by the three algorithms respectively.

\begin{figure}
  \centering
  \subfloat[][Total Time Cost of Each Method]{
    \label{fig:ptarperf:a} 
    \includegraphics[width=2.7in]{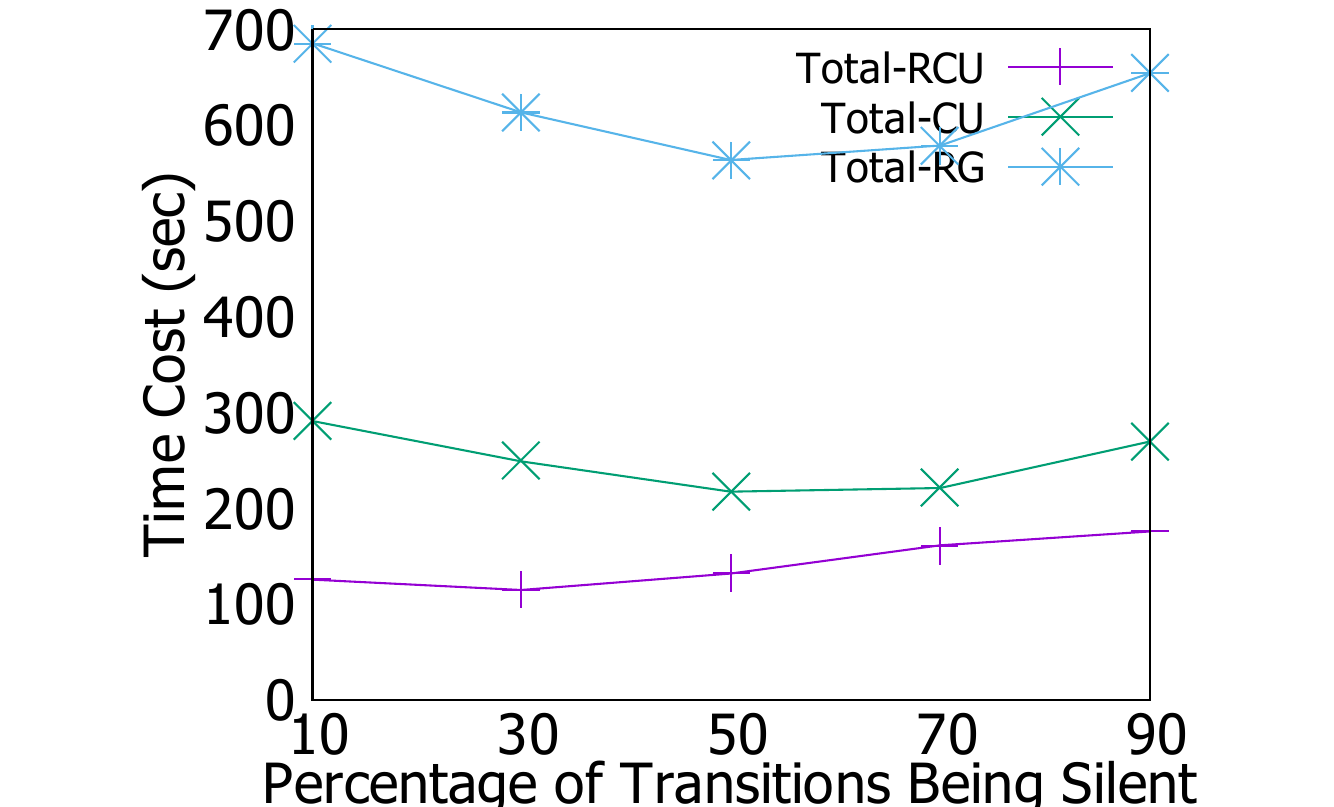}}
  \hspace{0.1in}
  \subfloat[][Detailed Performance of RCU]{
    \label{fig:ptarperf:b} 
    \includegraphics[width=2.7in]{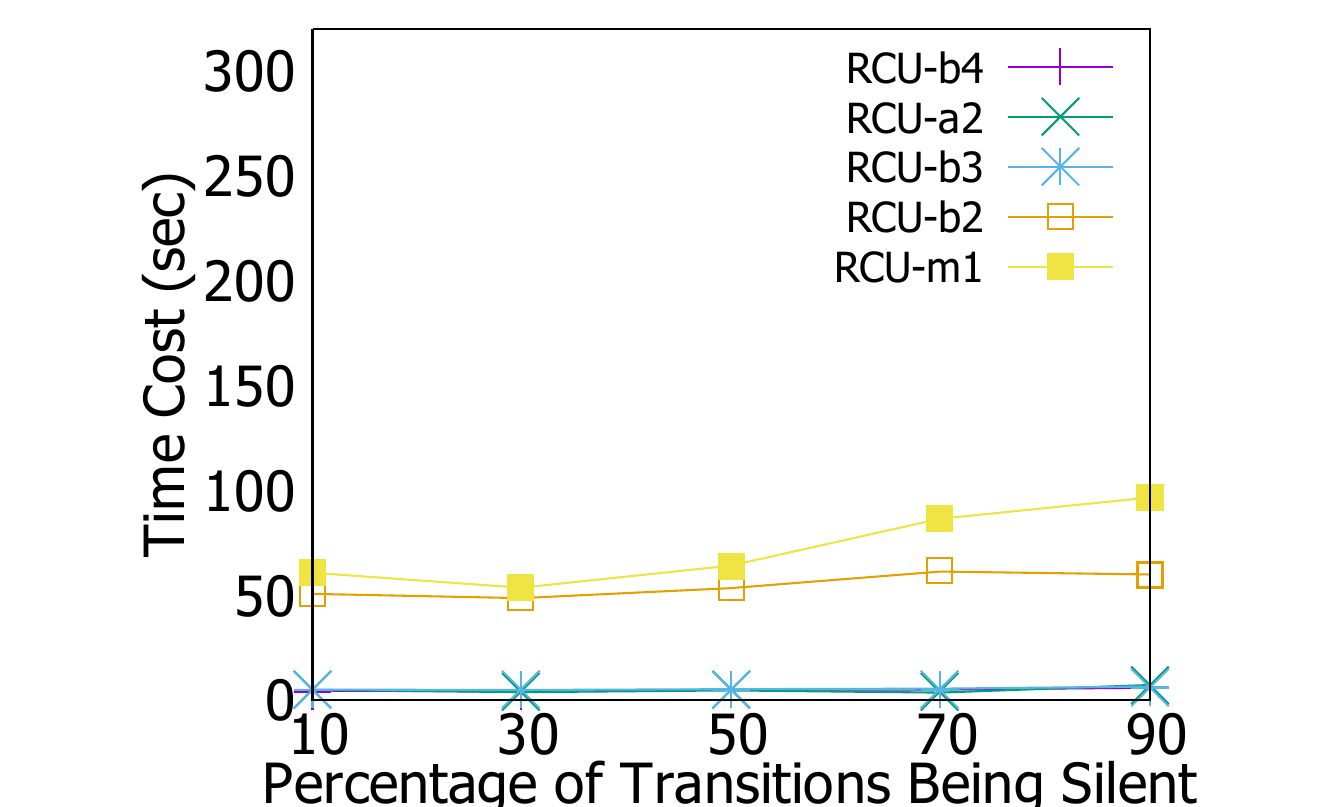}}
  \hspace{0.1in}
  \subfloat[][Detailed Performance of CU]{
    \label{fig:ptarperf:c} 
    \includegraphics[width=2.7in]{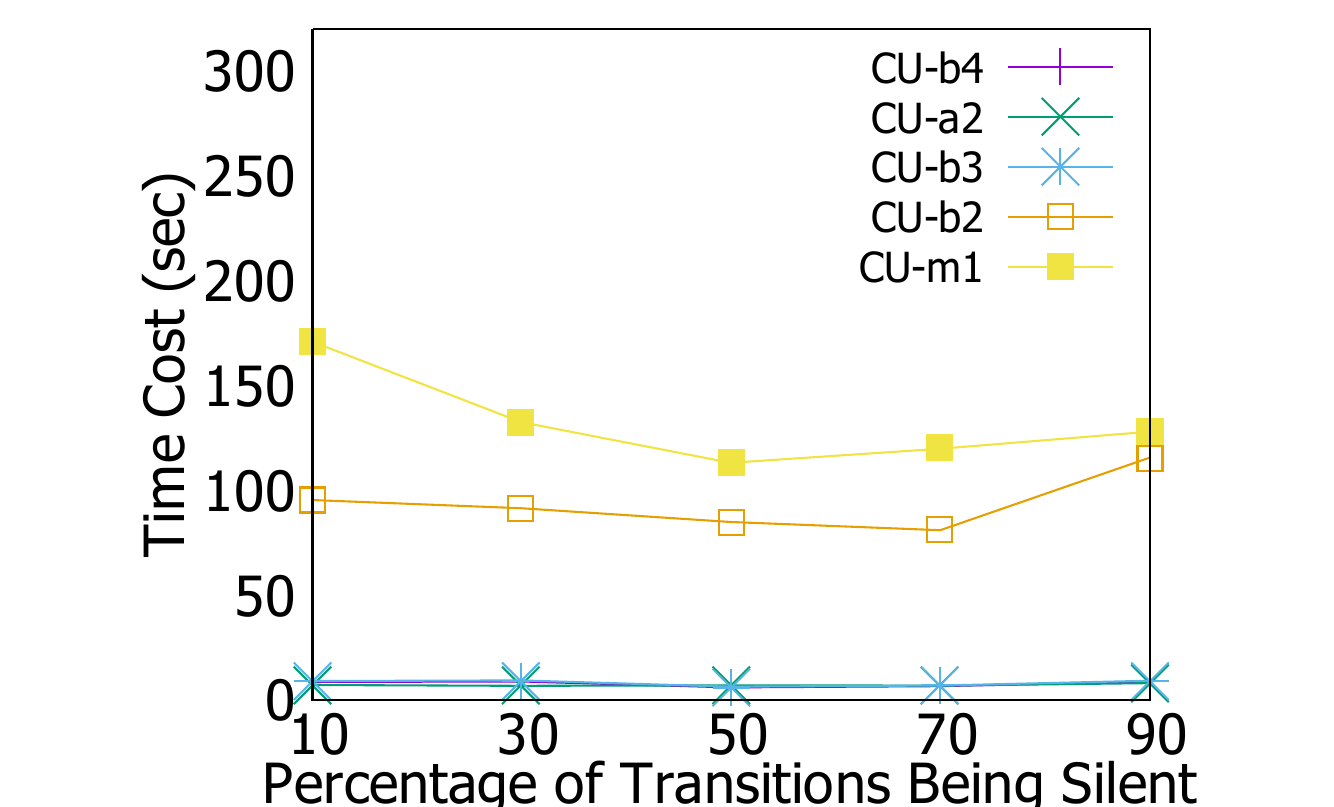}}
  \hspace{0.1in}
  \subfloat[][Detailed Performance of RG]{
    \label{fig:ptarperf:d} 
    \includegraphics[width=2.7in]{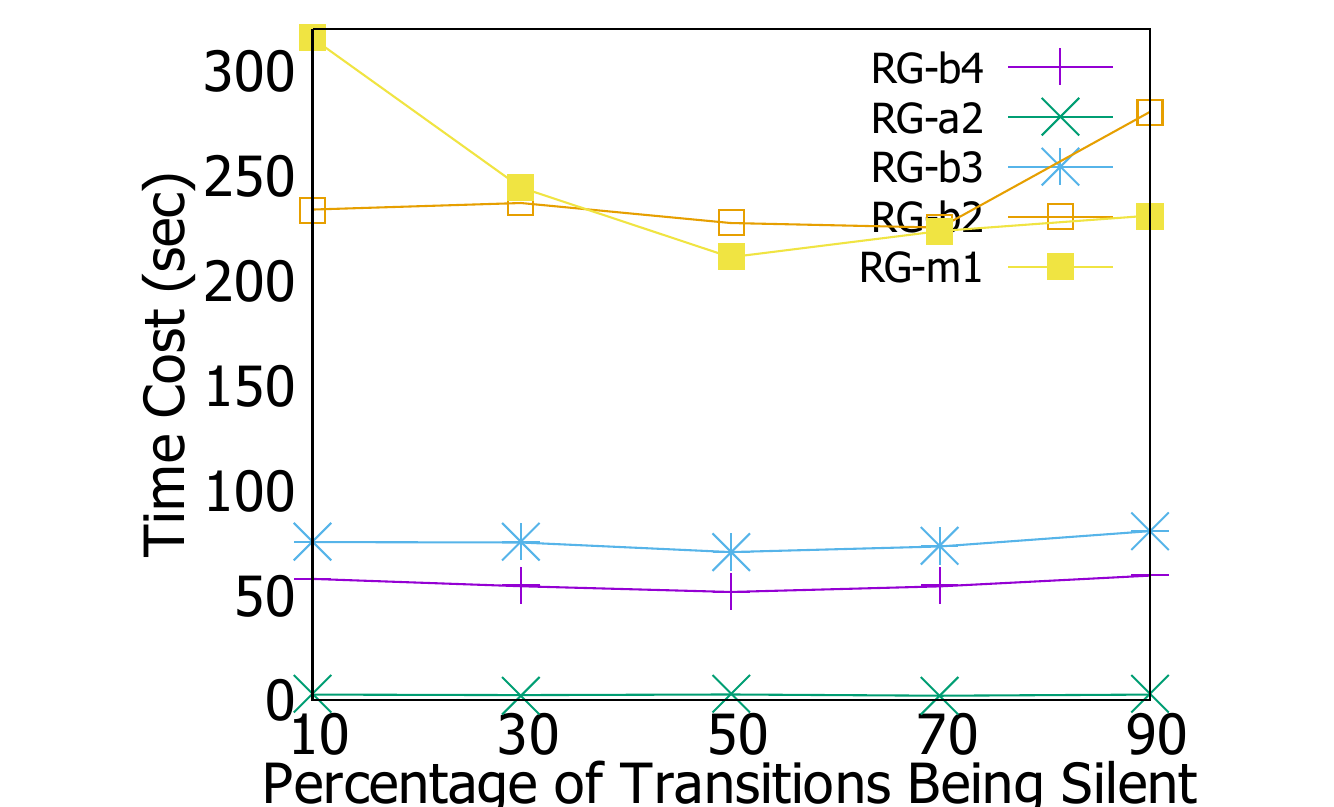}}

  \label{fig:ptarperfo} 
  \caption{pTAR time cost under varying percentage of silent transitions}
\end{figure}

As can be observed in the above figures, RCU performs the best among all three algorithms we have tested (RG, CU and its accelerated version RCU). Moreover, we could see that by using the redirected CFP and avoid unnecessary local silent extensions, the time cost of RCU is significantly less than CU.

\section{Combining TAR and BP for Similarity-based Process Model Querying}\label{sec:Comb}

The implications of the above results are not only about the improved efficiency of TAR computation. As Behavior Profiles (BP) are also efficiently computable through Petri net unfolding~\cite{CBP10}, we can easily enhance BP information with additional TAR information by re-using the information contained in the unfolding structure (and vice versa). Therefore, the long-expected combined usage of TAR and BP for process model similarity 
estimation~\cite{CII12} could now be realized with low overhead because of the above efficient algorithms. 

In Sect.~\ref{sect:intro}, we have demonstrated the advantage of TAR over BP in terms of the discriminative power for various process model behaviors (i.e. loops, parallel branches and silent transitions). In this section, we will conduct experiments using industrial benchmark datasets to further illustrate the implications of our proposal for process model similarity estimation.

Our evaluation is based on the scenario of process model retrieval. The problem of  process model retrieval is defined as: given some query process model $p$ and a repository of process models $\mathcal{C}$, find in $\mathcal{C}$ process models that are most similar with $p$, and rank them in order of similarity. For benchmark dataset, we used the SAP reference model repository as in~\cite{IS11}, which contains 100 process models in total, among which there are 10 query models. The dataset comes with ground-truth tags provided by experts, suggesting which models are similar to which query models by boolean values. For the similarity estimation between tasks with different activity descriptions, we used the label edit distance threshold as in~\cite{IS11} to determine whether two tasks with different descriptions points to the same activity.

For similarity estimation, we use the classic definition of Jaccard coefficient, by which similarity from the perspective of TAR and BP can be given respectively as follows.

\begin{definition}[TAR Similarity]For net system $N_1=(P_1, T_1, F_1)$ and $N_2=(P_2, T_2, F_2)$ (with initial markings of $M_1,M_2$ respectively), let
$TS_1$ and $TS_2$ be the TAR set of the two nets, then the TAR similarity between $N_1$ and $N_2$ can be given as follows:
$$
sim_{T}((N_1,M_1),(N_2,M_2))=\frac{|TS_1\cap TS_2|}{|TS_1\cup TS_2|}
$$
\end{definition}

\begin{definition}[BP Similarity]For net system $N_1=(P_1, T_1, F_1)$ and $N_2=(P_2, T_2, F_2)$(with initial markings of  $M_1,M_2$ respectively), $\mathcal{B}_1$ and $\mathcal{B}_2$ be the BP set of the two nets, then the BP similarity between $N_1$ and $N_2$ can be given as follows:
$$
sim_{BP}((N_1,M_1),(N_2,M_2))=\frac{|\mathcal{B}_1\cap \mathcal{B}_2|}{|\mathcal{B}_1\cup \mathcal{B}_2|}
$$
\end{definition}

Consequently, the ranking functions for process models in $C$ given a query model $p\in P$ can be given as follows

\begin{definition}[TAR Ranking] For some query model $p_p\in P$ and some target model $p_c$ in repository $C$, the TAR ranking of $p_c$ is given as $$R_{T}(p_p,p_c)=|\{p_c'\in C|sim_{T}(p_p,p'_c)\geq sim_{T}(p_p,p_c)\}|$$
\end{definition}

\begin{definition}[BP Ranking] Similarly, for some query model $p_p\in P$ and some target model $p_c$ in repository $C$, the BP ranking of $p_c$ is given as $$R_{BP}(p_p,p_c)=|\{p_c'\in C|sim_{BP}(p_p,p'_c)\geq sim_{BP}(p_p,p_c)\}|$$
\end{definition}

We may use the product of TAR ranking and BP ranking to combine their influence to the similarity ranking result as follows：
\begin{definition}[TAR-BP Ranking] For some query model $p_p\in P$ and some target model in $p_c\in C$, the TAR-BP ranking of $p_c$ can be given as $$R_{TBP}(p_p,p_c)=|\{p_c'\in C|R_{T}(p_p,p'_c)\cdot R_{BP}(p_p,p'_c)\leq$$$$R_{T}(p_p,p_c)\cdot R_{BP}(p_p,p_c)\}|$$
\end{definition}

With the above similarity ranking functions, we evaluate the process model retrieval performance when (1) only TAR (2) only BP and (2) both TAR and BP are considered. In Fig. \ref{fig:precision-recall} we compared the precision-recall performance of these three cases. It can be observed that the combined usage of TAR and BP could improve the overall performance of process model retrieval, which agrees with our previous discussions in Sect. \ref{sect:intro}. 

\begin{figure}
  \centering
 \includegraphics[width=2.8in]{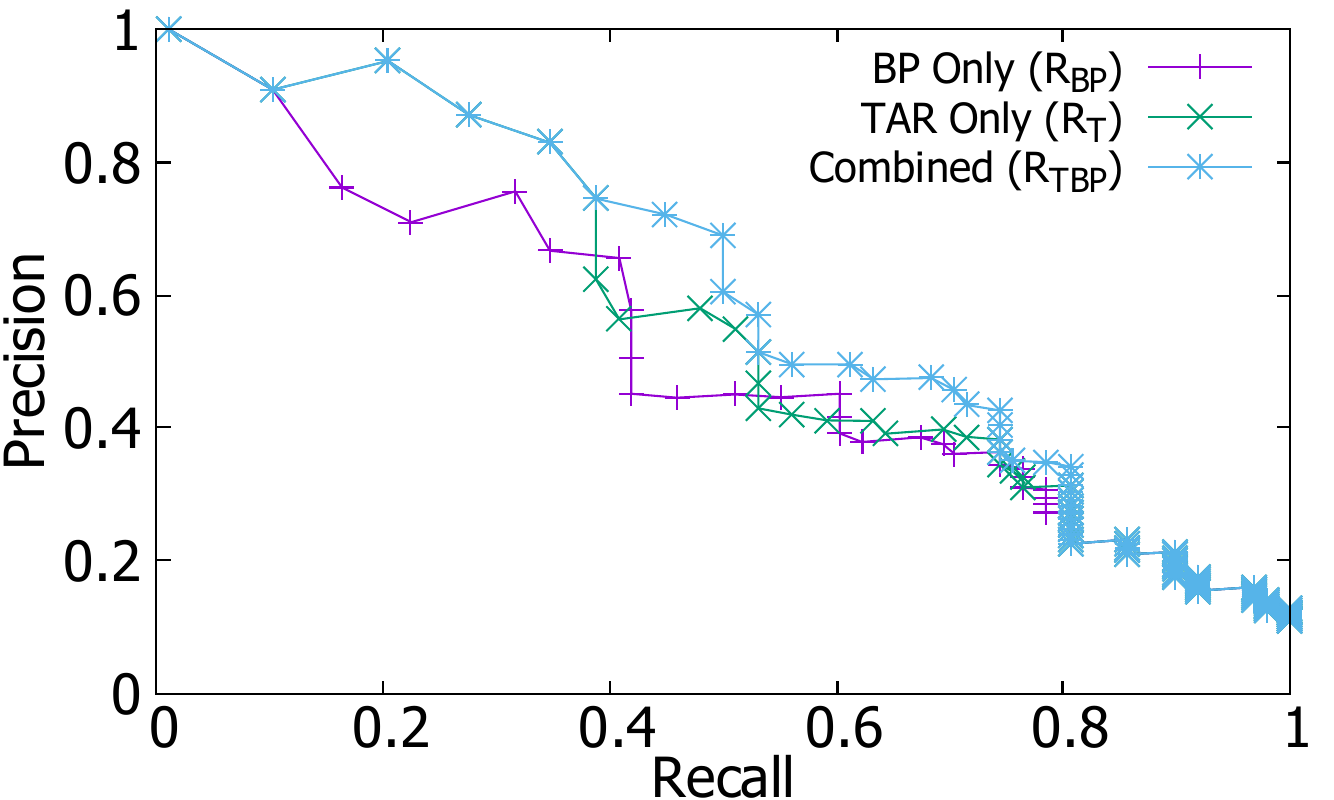}
    \caption{Recall-precision performance of different similarity measures}
  \label{fig:precision-recall}
\end{figure}

\section{Conclusion}\label{sec:Conc}

In this article we proposed comprehensive strategies for the computation of TAR and pTAR based on Petri net unfoldings. These strategies attempt to exploit as much information in unfoldings as possible, by translating causal patterns and $co$-relation patterns into TAR/pTAR results. The fundamental challenge of unfolding based TAR/pTAR computation is the handling of cut-off events, which may truncate information needed for TAR/pTAR confirmation. Novel techniques to derive locally continued unfolding and redirected CFP for efficient recovery of necessary information after cut-off events are proposed, proved and evaluated in this article. Experiments show that the proposed strategies achieve significant performance improvement over existing methods based on reachability graphs. This
will support further researches and more scalable applications of TAR/pTAR in various business process analytic tasks.

\ifCLASSOPTIONcaptionsoff
  \newpage
\fi



%

%
 

\begin{IEEEbiography}[{\includegraphics[width=1in,height=1.25in,clip,keepaspectratio]{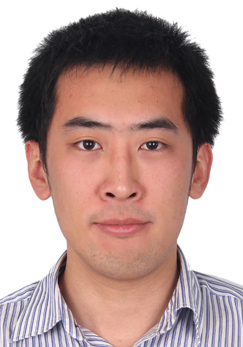}}]{Jisheng Pei} received his BS degree in Computer Software and his PhD degree in Computer Science from Tsinghua University. His research interests include the processing and analysis methods of data provenance, process mining, similarity measurements of process traces and models, as well as behavior analytics and service computing.
\end{IEEEbiography}

\begin{IEEEbiography}[{\includegraphics[width=1in,height=1.25in,clip,keepaspectratio]{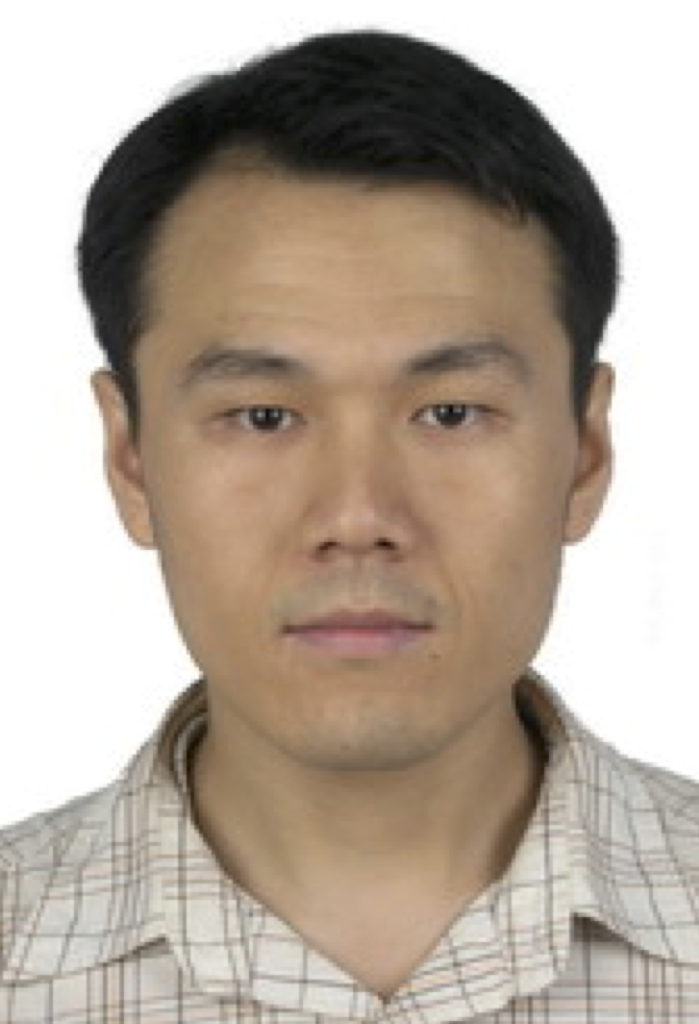}}]{Lijie Wen}
received the BS degree, the Ph.D. degree in Computer Science and Technology from Tsinghua University, Beijing, China, in 2000 and 
2007 respectively. He is currently an associate professor at School of Software, Tsinghua University. His research interests are focused on process mining, process data management (e.g., log completeness, trace clustering, process similarity, process indexing and retrieval), and lifecycle management of workflow for big data analysis.
\end{IEEEbiography}

\begin{IEEEbiography}[{\includegraphics[width=1in,height=1.25in,clip,keepaspectratio]{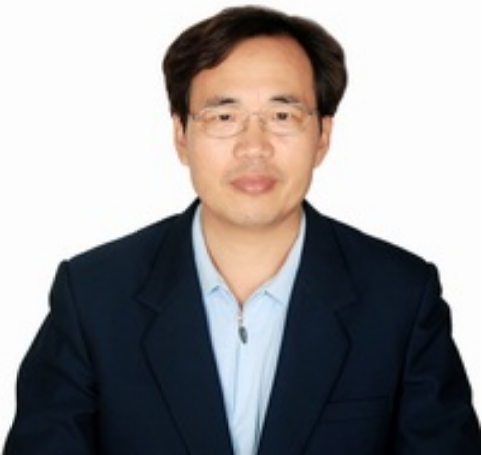}}]{Xiaojun Ye}
received the BS degree in mechanical engineering from Northwest Polytechnical University, Xian, China, in 1987 and the Ph.D. degree in information engineering from INSA Lyon, France, in 1994. Currently, he is a professor at School of Software, Tsinghua University, Beijing, China. His research interests include cloud data management, data security and privacy, and database system testing.
\end{IEEEbiography}

\begin{IEEEbiography}
[{\includegraphics[width=1in,height=1.25in,clip,keepaspectratio]{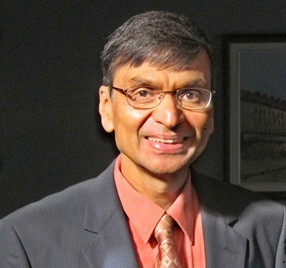}}]{Akhil Kumar}
has a PhD from the University of California, Berkeley, and is a full professor of information systems at the Smeal College of Business, Pennsylvania State University. His current research interests are in business process management (BPM) and workflow systems, process mining, supply chain and business analytics, and health IT. He is a senior member of IEEE and an associate editor for ACM Transactions on Management Information System.
He served as a co-chair for the International Business Process Management Conference in 2017. He has been a principal investigator for National Science Foundation and also received support from IBM, HP, and other organizations for his work.
\end{IEEEbiography}


%




\end{document}